\newtheorem{lemma}{Lemma}
\newtheorem{theorem}{Theorem}
\newenvironment{proof_of_claim}{\noindent {\it Proof of Claim: }}{\hspace*{\fill} $\diamond$}
\title{Improved Approximations for Cubic Bipartite and Cubic TSP}
\author{ Anke van Zuylen\thanks{Supported in part by NSF Prime Award: HRD-1107147, Women in Scientific Education (WISE) and by a grant from the Simons Foundation (\#359525, Anke Van Zuylen).}\\ 
\small Department of Mathematics\\ 
\small  The College of William and Mary, Williamsburg, VA, 23185, USA\\
\small     \texttt{anke@wm.edu}
}
\begin{document}
\date{\vspace{-2\baselineskip}}
\maketitle

\DontPrintSemicolon
\LinesNotNumbered
\SetAlCapSkip{8pt}

\begin{abstract}
We show improved approximation guarantees for the traveling salesman problem on cubic bipartite graphs and cubic graphs.
For cubic bipartite graphs with $n$ nodes, we improve on recent results of Karp and Ravi by giving a ``local improvement'' algorithm that finds a tour of length at most $5/4n-2$. For 2-connected cubic graphs, we show that the techniques of  M\"omke and Svensson can be combined with the techniques
of Correa, Larr\'e and Soto, to obtain a tour of length at most $(4/3-1/8754)n$.

\end{abstract}

\section{Introduction}
The traveling salesman problem (TSP) is one of the most famous and widely studied combinatorial optimization problems. Given a set of cities and pairwise distances, the goal is to find a tour of minimum length that visits every city exactly once. Even if we require the distances to form a metric, the problem remains NP-hard. The classic Christofides' algorithm~\cite{Christofides76} finds a tour that has length at most $\frac32$ times the length of the optimal tour. Despite much effort in the 35 years following Christofides's result, we do not know any algorithms that improve on this guarantee. 

One approach that has often been useful in designing approximation algorithms is the use of linear programming. In this context, a major open question is to determine the integrality gap of the subtour elimination linear program or Held-Karp relaxation~\cite{DantzigFJ54,HeldK70}; the integrality gap is the worst-case ratio of the length of an optimal tour to the optimal value of the relaxation. Examples are known in which the length of the optimal tour is $\frac43$ times the value of the Held-Karp relaxation, and a major open question is whether this is tight.

Recent years have seen some exciting progress towards answering this question on graph metrics, also called the graph-TSP.
In this special case of the metric TSP, we are given an unweighted graph $G=(V,E)$ in which the nodes represent the cities, and the distance between two cities is equal to the shortest path in $G$ between the corresponding nodes. Examples are known in which the ratio between the length of the optimal tour and the Held-Karp relaxation is $\frac 43$, where the examples are in fact graph-TSP instances with an underlying graph $G$ that is 2-connected and subcubic (every node has degree at most three).

The graph-TSP thus captures many of the obstacles that have prevented us from obtaining improved approximations for general metrics, and much recent research has focused on finding improved algorithms for the graph-TSP.
The first improvement for graph-TSP metrics is due to Gamarnik, Lewenstein and Sviridenko~\cite{GamarnikLS05}, who show an approximation guarantee strictly less than $\frac32$ for cubic, 3-connected graphs. Aggarwal, Garg and Gupta~\cite{AggarwalGG11} give a $\frac43$-approximation algorithm for this case.
Boyd, Sitters, Van der Ster, and Stougie~\cite{BoydSSS14} show that there is a $\frac43$-approximation algorithm for any cubic graph, and M\"omke and Svensson~\cite{MomkeS16} show this holds also for subcubic graphs.

M\"omke and Svensson also show a 1.461-approximation algorithm if we make no assumptions on the underlying graph $G$. Mucha~\cite{Mucha14} improves their analysis to show an approximation guarantee of $\frac{13}9$. Seb\H o and Vygen~\cite{SeboV14} combine the techniques of M\"omke and Svensson with a clever use of ear decompositions, which yields an approximation ratio of $1.4$.

As mentioned previously, for subcubic graphs, examples exist that show that we cannot obtain better approximation guarantees than $\frac43$ unless we use a stronger lower bound on the optimum than the Held-Karp relaxation or subtour elimination linear program. Correa, Larr\'e and Soto~\cite{CorreaLS15} show that this is not the case for cubic graphs. They refine the techniques of Boyd et al.~\cite{BoydSSS14} and show how to find a tour of length at most $\left(\frac43-\frac1{61236}\right)n$ for the graph-TSP on a 2-connected cubic graph $G$, where $n$ is the number of nodes.
Correa, Larr\'e and Soto also consider the graph-TSP on planar cubic bipartite 3-connected graphs, and give a $(\frac43-\frac1{18})$-approximation algorithm. Planar cubic bipartite 3-connected graphs are known as Barnette graphs, and a long-standing conjecture in graph theory by Barnette~\cite{Barnette69} states that all planar cubic bipartite 3-connected graphs are Hamiltonian.
Recently, Karp and Ravi~\cite{KarpR14} gave a $\frac97$-approximation algorithm for the graph-TSP on a superset of Barnette graphs, cubic bipartite graphs. 

In this paper, we give two results that improve on the results for cubic graph-TSP. For the graph-TSP on (non-bipartite) cubic graphs, we show that the techniques of M\"omke and Svensson~\cite{MomkeS16} can be combined with those of Correa et al.~\cite{CorreaLS15} to find an approximation algorithm with guarantee $(\frac43-\frac1{8754})$. We note that independent of our work, Candr\'akov\'a and Lukot'ka~\cite{CandrakovaL15} showed very recently, using different techniques, how to obtain a $1.3$-approximation algorithm for the graph-TSP on cubic graphs.
For connected bipartite cubic graphs, we give an algorithm that finds a tour of length at most $\frac{5}{4} n-2$, where $n$ is the number of nodes. The idea behind our algorithm is the same as that of many previous papers, namely to find a cycle cover (or 2-factor) of the graph with a small number of cycles. Our algorithm is basically a simple ``local improvement'' algorithm. The key idea for the analysis is to assign the size of each cycle to the nodes contained in it in a clever way; this allows us to give a very simple proof that the algorithm returns a 2-factor with at most $n/8$ components.
We also give an example that shows that the analysis is tight, even if we relax a certain condition in the algorithm that restricts the cases when we allow the algorithm to move to a new solution.

The remainder of this paper is organized as follows.
In Section~\ref{sec:bip}, we describe and analyze our algorithm for the graph-TSP on cubic bipartite graphs, and in Section~\ref{sec:cub}, we give our improved result for non-bipartite cubic graphs. 

\section{The Graph-TSP on Cubic Bipartite Graphs}\label{sec:bip}

In the graph-TSP, we are given a graph $G=(V,E)$, and for any $u,v\in V$, we let the distance between $u$ and $v$ be the number of edges in the shortest path between $u$ and $v$ in $G$. The goal is to find a tour of the nodes in $V$ that has minimum total length.
A 2-factor of $G$ is a subset of edges $F\subseteq E$, such that each node in $V$ is incident to exactly two edges in $F$. Note that if $F$ is a 2-factor, then each (connected) component of $(V,F)$ is a simple cycle. If $C$ is a component of $(V, F)$, then we will use $V(C)$ to denote the nodes in $C$ and $E(C)$ to denote the edges in $C$. The size of a cycle $C$ is defined to be $|E(C)|$ (which is of course equal to $|V(C)|$). Sometimes, we consider a component of $(V, F\setminus E')$ for some $E'\subset F$. A component of such a graph is either a cycle $C$ or a path $P$. We define the length of a path $P$ to be the number of edges in $P$. 

The main idea behind our algorithm for the graph-TSP in cubic bipartite graphs (and behind many algorithms for variants of the graph-TSP given in the literature) is to find a 2-factor $F$ in $G$ such that $(V,F)$ has a a small number of cycles, say $k$. We can then contract each cycle of the 2-factor, find a spanning tree on the contracted graph, and add two copies of the corresponding edges to the 2-factor. This yields a spanning Eulerian (multi)graph containing $n+2(k-1)$ edges. By finding a Eulerian walk in this graph and shortcutting, we get a tour of length at most $n+2k-2$. In order to get a good algorithm for the graph-TSP, we thus need to show how to find a 2-factor with few cycles, or, equivalently, for which the average size of the cycles is large.

In Section~\ref{tech}, we give an algorithm for which we prove in Lemma~\ref{lemma:2factor} that, given a cubic bipartite graph $G=(V, E)$, it returns a 2-factor with average cycle size at least 8. By the arguments given above, this implies the following result.
\begin{theorem}
There exists a $\frac54$-approximation algorithm for the graph-TSP on cubic bipartite graphs.
\end{theorem}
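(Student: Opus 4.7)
The theorem is a direct consequence of Lemma~\ref{lemma:2factor}, via the reduction spelled out in the paragraph preceding that lemma. Specifically, given a 2-factor $F$ with $k$ cycles in a cubic bipartite graph on $n$ nodes, contracting the cycles, taking a spanning tree of the resulting graph, and doubling its edges produces a connected Eulerian multigraph with $n+2(k-1)$ edges, which after shortcutting a Eulerian walk yields a tour of length at most $n+2k-2$. Lemma~\ref{lemma:2factor} promises average cycle size at least $8$, i.e.\ $k\le n/8$, so the tour has length at most $\frac{5}{4}n-2$; since any Hamiltonian tour has length at least $n$, this is a $\frac{5}{4}$-approximation. Consequently my proposal focuses on a plan for the lemma.

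For the algorithm I would first invoke K\"onig's theorem: every regular bipartite graph, and in particular every cubic bipartite graph, has a perfect matching $M$, and $E\setminus M$ is then a 2-factor (two perfect matchings). Because $G$ is bipartite, every cycle in any 2-factor has even length, so the shortest cycles have length $4$. The local improvement rule would target short cycles (lengths $4$ and $6$): given such a cycle $C$, examine the matching edges $E\setminus F$ incident to $V(C)$, and attempt a swap that removes two edges of $F$ from $C$ and adds two edges of $E\setminus F$ crossing between $C$ and a cycle $C'$ that shares a matching edge with $C$, producing a new 2-factor with strictly fewer cycles (the two cycles merge). The algorithm iterates until no such swap is available; termination is immediate because each swap strictly decreases the cycle count.

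The heart of the analysis is the charging scheme advertised in the introduction. I would give each node a unit credit and then redistribute so that every cycle of the final 2-factor ends up with total credit at least $8$; summing gives $n=\sum_C (\text{credit of }C)\ge 8k$ as desired. A cycle of length at least $8$ simply keeps its own credit. A $4$- or $6$-cycle $C$ instead collects extra credit through the matching edges leaving $V(C)$: local optimality will force each cycle $C'$ adjacent to $C$ via such a matching edge to be either large or to share only limited contact with $C$, since otherwise a swap would have merged them. The redistribution rule would transfer a fixed amount of credit from a neighbor $C'$ to $C$ along each matching edge joining them, and the verification consists of (i) checking that $C$ receives at least $8-|C|$ units, and (ii) checking that no donating cycle $C'$ ever drops below its own threshold, even if it borders several short cycles.

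The main obstacle, as I see it, is calibrating the local move and the charging simultaneously. The allowed swaps must be restrictive enough that, upon termination, the structural consequences around a short cycle are rich enough to pay for its credit deficit, yet permissive enough that ``stuck'' configurations really are structurally constrained. The worst case will be a $4$-cycle $C$ whose four matching neighbors sit in distinct small cycles; showing that local optimality rules this out (or that enough of the neighbors are large to cover the deficit) is the delicate step, and the paper's remark that some restriction on allowed moves is needed for tightness suggests this is exactly where the balance must be struck.
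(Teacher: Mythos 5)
Your reduction from the $2$-factor lemma to the theorem is correct and identical to the paper's: contract the $k\le n/8$ cycles, double a spanning tree, shortcut the Euler walk, and use $\mathrm{OPT}\ge n$. The substance of the theorem is entirely in establishing that a $2$-factor with at most $n/8$ components can be found, and there your plan diverges from the paper in a way that leaves a real gap.

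You propose a \emph{single} $2$-factor $F$, a merge-style local move that strictly decreases the number of cycles, and a unit-credit redistribution scheme so that every cycle of the final $F$ collects $8$ units. The paper does something structurally different: it keeps \emph{two} $2$-factors $F_1$ and $F_2$ (with $F_2\supseteq E\setminus F_1$), where the local move modifies only $F_2$ by taking symmetric differences with cycles of $F_1$, and the target condition is that every $F_1$-cycle $C$ meets some $F_2$-cycle of size $\ge 10$ in at least $4$ nodes. The charging is not a per-node credit summing to $n\ge 8k$; instead the paper sets $\alpha(v)=1/|D|$ for $v$ in $F_2$-cycle $D$ (with a special split for $10$-cycles) and proves, per $F_1$-cycle $C$, that $\sum_{v\in C}\alpha(v)\le \tfrac16|V(C)|-\tfrac13$, which yields $\tfrac14 K_1+\tfrac34 K_2\le |V|/8$, so the \emph{better} of $F_1,F_2$ has few components. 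You never get a handle on that kind of averaged bound because you only track one $2$-factor.

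Concretely, the step you flag as delicate is exactly the step your outline cannot close. With a single $2$-factor and merge moves, local optimality does not forbid a $4$-cycle (or several $6$-cycles) whose four matching neighbours sit in distinct cycles that cannot be merged by any $2$-swap; a credit-redistribution argument has nothing to draw on there. The paper avoids this by (i) a preprocessing reduction (Lemma~\ref{lemma:four}) that \emph{contracts all potential $4$-cycles}, so after preprocessing no $2$-factor has a $4$-cycle at all — your plan omits this entirely — and (ii) the chordless/chorded case analysis (Lemma~\ref{lemma:chordless} and the chord operation in Algorithm~\ref{fig:cubicbip}) together with the alternating-cycle invariant (Lemma~\ref{lemma:invariant}), which are what actually force the ``$\ge 4$ nodes in a size-$\ge 10$ $F_2$-cycle'' property at termination. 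Also note that termination in the paper is \emph{not} by monotone cycle-count decrease — the move $F_2\leftarrow F_2\triangle E(C)$ need not reduce the number of cycles of $(V,F_2)$ — so the paper's termination argument (end of Lemma~\ref{lemma:2factor}) is more subtle than the ``strictly decreasing potential'' you rely on. In short, your reduction paragraph is fine, but your plan for the lemma is a different (and incomplete) route that is missing the two-2-factor framework, the potential-$4$-cycle contraction, and the structural lemmas that make the charging go through.
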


Before we give the ideas behind our algorithm and its analysis in Section~\ref{local}, we begin with the observation that we may assume without loss of generality that the graph has no ``potential 4-cycles'': a set of 4 nodes $S$ will be called a potential 4-cycle if there exists a 2-factor in $G$ that contains a cycle with node set exactly $S$. The fact that we can modify the graph so that $G$ has no potential 4-cycles was also used by Karp and Ravi~\cite{KarpR14}. 
\begin{lemma}\label{lemma:four}
To show that every simple cubic bipartite graph $G=(V, E)$ has a 2-factor with at most $|V|/8$ components, it suffices to show that every simple cubic bipartite graph $G'=(V',E')$ with no potential 4-cycles has a 2-factor with at most $|V'|/8$ components.
\end{lemma}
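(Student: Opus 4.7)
The plan is to proceed by strong induction on $|V|$. If $G$ has no potential 4-cycles, the hypothesis of the lemma gives a 2-factor with at most $|V|/8$ components. Otherwise, let $C = \{a_1, b_1, a_2, b_2\}$ be a potential 4-cycle in $G$, where $a_1, a_2$ lie in one part of the bipartition and $b_1, b_2$ in the other, so that all four edges $a_i b_j$ are present. Because $G$ is cubic, each vertex of $C$ has exactly one edge leaving $C$; call $p, q$ the external neighbors of $a_1, a_2$ and $r, s$ the external neighbors of $b_1, b_2$.

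The key construction is to form $G'$ by deleting the four vertices of $C$ together with their eight incident edges and adding two new vertices $u$ (on the $a$-side) and $v$ (on the $b$-side), together with the five edges $up, uq, uv, vr, vs$. Clearly $|V(G')| = |V| - 2$ and $G'$ is cubic and bipartite. Simplicity of $G'$ reduces to showing $p \neq q$ and $r \neq s$, as $\{p,q\}$ and $\{r,s\}$ lie in different parts of the bipartition. I claim neither coincidence can occur: if $p = q$, then in any 2-factor of $G$ containing $C$ as a cycle, both $pa_1$ and $pa_2$ would be excluded (since $a_1, a_2$ already have degree $2$ inside $C$), leaving $p$ with at most one usable edge and violating its degree requirement. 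This contradicts $C$ being a potential 4-cycle, so $p \neq q$; symmetrically, $r \neq s$. Hence $G'$ is a simple cubic bipartite graph on $|V|-2$ vertices, and by the inductive hypothesis it has a 2-factor $F'$ with at most $(|V|-2)/8$ components.

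The last step is to lift $F'$ to a 2-factor $F$ of $G$ with no more components. If $uv \in F'$, then $F'$ uses exactly one of $\{up, uq\}$ and exactly one of $\{vr, vs\}$, so the cycle of $F'$ through $u, v$ contains a segment $x - u - v - y$ with $x \in \{p,q\}$ and $y \in \{r,s\}$. Replacing this segment by the (unique) Hamilton path of $C$ from the $a$-vertex adjacent to $x$ to the $b$-vertex adjacent to $y$ yields a valid 2-factor of $G$ with the same number of components as $F'$. If $uv \notin F'$, then $\{up, uq, vr, vs\} \subseteq F'$; removing $u, v$ from $F'$ breaks the cycles through them into paths whose four endpoints are $\{p,q,r,s\}$ in some pairing. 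The two perfect matchings of $C$, namely $\{a_1 b_1, a_2 b_2\}$ and $\{a_1 b_2, a_2 b_1\}$, give the two ways to route a pair of vertex-disjoint paths through $C$: one pair pairs $p$ with $r$ and $q$ with $s$, the other pairs $p$ with $s$ and $q$ with $r$. If $u$ and $v$ lie in different cycles of $F'$, either choice fuses those two cycles into one cycle of $F$, strictly decreasing the count; if $u, v$ lie in a common cycle of $F'$, exactly one of the two choices preserves that cycle (the other splits it in two), and I pick the consistent one. In all cases $F$ has at most $(|V|-2)/8 \le |V|/8$ components, closing the induction.

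The main technical obstacle is the case analysis when $uv \notin F'$: one must choose the perfect matching of $C$ that re-glues the open paths of $F' \setminus \{u, v\}$ into a single cycle rather than into two cycles. The simplicity check for $G'$ is short, but genuinely uses the fact that $C$ is a \emph{potential} 4-cycle, not merely an arbitrary 4-cycle of $G$.
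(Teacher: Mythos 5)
Your proposal is correct and follows essentially the same approach as the paper: you contract the potential 4-cycle into two vertices joined by a single edge (your $u,v$ construction is exactly the paper's identification of $v_1,v_3\mapsto v_{\mathrm{odd}}$ and $v_2,v_4\mapsto v_{\mathrm{even}}$), verify simplicity using the defining property of a potential 4-cycle, and uncontract a 2-factor of $G'$ into one of $G$ without increasing the component count, splitting into the cases $uv\in F'$ and $uv\notin F'$. You actually give somewhat more detail than the paper on the $uv\notin F'$ case (explicitly working out which perfect matching of $C$ reglues the broken paths into a single cycle, rather than just asserting that one choice works), and you make the implicit induction on $|V|$ explicit, but the decomposition and the key lemma are the same.
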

\begin{proof}
We show how to contract a potential 4-cycle $S$ in $G$ to get a simple cubic bipartite graph $G'$ with fewer nodes than the original graph, and how, given a 2-factor with average component size 8 in $G'$, we can uncontract $S$ to get a 2-factor in $G$ without increasing the number of components.

Let $S=\{v_1,v_2,v_3,v_4\}$ be a potential 4-cycle in $G$, i.e., $E[S]$ contains 4 edges, say $\{v_1,v_2\}, \{v_2, v_3\}, \{v_3, v_4\}, \{v_1, v_4\}$, and there exists no node $v_5\not\in S$ that is incident to two nodes in $\{v_1,v_2,v_3,v_4\}$ (since in that case a 2-factor containing a cycle with node set $S$ would have $v_5$ as an isolated node, and this cannot be a 2-factor since $v_5$ must have degree 2 in a 2-factor).

\newcommand{\odd}{\mathrm{odd}}
\newcommand{\even}{\mathrm{even}}
We contract $S$, by identifying $v_1,v_3$ to a new node $v_{\odd}$, and identifying $v_2,v_4$ to a new node $v_{\even}$. We keep a single copy of the edge $\{v_{\odd}, v_{\even}\}$. The new graph $G'$ is simple, cubic and bipartite, and $|V'|= |V|-2$. See Figure~\ref{fig:square} for an illustration.

\begin{figure}[ht!]\begin{center}
\resizebox{1.5in}{1.5in}{
\begin{tikzpicture}[->,>=stealth',shorten >=1pt,auto,node distance=3cm,
  thick,main node/.style={circle,fill=blue!20,draw,font=\sffamily\LARGE\bfseries},aux node/.style={}]

  \node[main node] (1) {$v_1$};
  \node[main node] (2) [right of=1] {$v_2$};
  \node[main node] (3) [below of=2] {$v_3$};
  \node[main node] (4) [below of=1] {$v_4$};
  \node[aux node] (1a) [above left of=1] {};
  \node[aux node] (2a) [above right of=2] {};
  \node[aux node] (3a) [below right of=3] {};
  \node[aux node] (4a) [below left of=4] {};

   \path[-,every node/.style={font=\sffamily\LARGE}]
     (1a) edge node [left] {$e_1$} (1)
     (2a) edge node [left] {$e_2$} (2)
     (4a) edge node [left] {$e_4$} (4)
     (3a) edge node [left] {$e_3$} (3)
     (1) edge node [left] {} (2)
     (2) edge node [left] {} (3)
     (3) edge node [left] {} (4)
     (4) edge node [left] {} (1)
;
\end{tikzpicture}
}\hspace*{1in}
\resizebox{1.5in}{1.5in}{

\begin{tikzpicture}[->,>=stealth',shorten >=1pt,auto,node distance=3cm,
  thick,main node/.style={circle,fill=blue!20,draw,font=\sffamily\LARGE\bfseries},aux node/.style={}]

  \node[main node] (5) {$v_{\mathrm{odd}}$};
  \node[main node] (6) [below of=5] {$v_{\mathrm{even}}$};
  \node[aux node] (7) [right of=5] {};
  \node[aux node] (8) [right of=6] {};
  \node[aux node] (1a) [above left of=5] {};
  \node[aux node] (2a) [above right of=7] {};
  \node[aux node] (3a) [below right of=8] {};
  \node[aux node] (4a) [below left of=6] {};

   \path[-,every node/.style={font=\sffamily\LARGE}]
     (1a) edge node [left] {$e_1$} (5)
     (2a) edge node [left] {$e_2$} (6)
     (4a) edge node [left] {$e_4$} (6)
     (3a) edge node [left] {$e_3$} (5)
     (5) edge node [left] {} (6)
;
\end{tikzpicture}
}
\end{center}
\caption{The 4-cycle on the left is contracted, by identifying $v_1,v_3$ to a new node $v_{\odd}$, and identifying $v_2,v_4$ to a new node $v_{\even}$, and keeping a single copy of the edge $\{v_{\odd}, v_{\even}\}$, to obtain the simple cubic bipartite graph on the right.}\label{fig:square}
\end{figure}
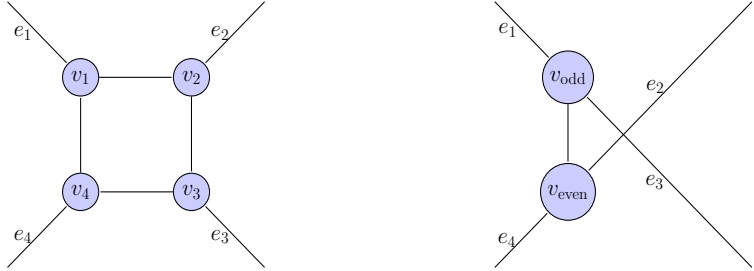

Given any 2-factor in $G'$, we can ``uncontract'' $S$ and find a 2-factor in $G$ with at most as many components as the 2-factor in $G'$:
If  the 2-factor on $G'$ does not contain $\{v_{\odd},v_{\even}\}$ then it must contain the other 4 edges incident to $v_\odd$ and $v_\even$. When uncontracting $S$, this gives one edge incident to each $v_i, i=1, \ldots, 4$. Since all other node degrees are even, the graph consists of even cycles and two paths with endpoints in $\{v_1,v_2,v_3,v_4\}$. We can choose to add the edges $\{v_1,v_2\}, \{v_3,v_4\}$, or the edges $\{v_2,v_3\}, \{v_1,v_4\}$; both of these choices give a 2-factor, and at least one of the two options must give a 2-factor in which all 4 nodes in $S$ are in the same cycle.
If  the 2-factor on $G'$ does contain $\{v_{\odd},v_{\even}\}$ then it must contain one other edge incident to $v_\odd$ and to $v_\even$. When uncontracting $S$, this gives one edge incident to $v_1$ or $v_3$, and one edge incident to $v_2$ or $v_4$. Suppose without loss of generality the edges are incident to $v_1$ and $v_2$. Then, we add edges $\{v_2,v_3\}, \{v_3,v_4\}$, and $\{v_1,v_4\}$ to get a 2-factor. Note that it is again the case that all 4 nodes in $S$ are in the same cycle in the resulting 2-factor.
\end{proof}

\subsection{A Local Improvement Heuristic}\label{local}
A cubic bipartite graph has a perfect matching (in fact, it is the case that the edge set can be decomposed into three perfect matchings), and given a cubic bipartite graph $G=(V,E)$, we can obtain a 2-factor $F$ by simply finding a perfect matching $M$ and letting $F=E\backslash M$. Conversely, if $F$ is a 2-factor for $G$, then $E\backslash F$ is a perfect matching in $G$. 
Now, given an arbitrary 2-factor $F_1$, we can use these observations to build a second 2-factor $F_2$ such that most nodes that are in a small cycle in $(V,F_1)$ are in a long cycle in $(V,F_2)$: The 2-factor $F_2$ is constructed by taking the matching $E\backslash F_1$ and adding half of the edges from each cycle in $(V,F_1)$. Since each cycle is even, its edges can be decomposed into two perfect matchings, and we may choose either one of them to add to $F_2$. We will say 2-factor $F_2$ is {\it locally optimal with respect to $F_1$} if $F_2$ contains all edges in $E\setminus F_1$ and for each cycle $C$ in $(V,F_1)$, replacing $F_2$ by $F_2\triangle E(C)$ does not reduce the number of components of $(V,F_2)$ (where $\triangle$ denotes the symmetric difference operator).

The essence of our algorithm is to start with an arbitrary 2-factor $F_1$, find a 2-factor $F_2$ that is locally optimal with respect to $F_1$, and return the 2-factor among $F_1,F_2$ with the smallest number of components.

If we consider a 6-cycle $C$ in $(V,F_1)$, and a 2-factor $F_2$ that is locally optimal with respect to $F_1$, then it is not hard to see that at least two edges of $C$ will be part of the same cycle, say $D$, in $(V,F_2)$. Moreover, the fact that the graph $G$ has no potential 4-cycles can be shown to imply that $D$ has size at least 10. 
This observation motivates the condition in Lemma~\ref{lemma:avg} below that for any $C$ in $(V, F_1)$, there should exist $D$ in $(V, F_2)$ of size at least 10, such that $|V(C)\cap V(D)|\ge 4$. 

In Lemma~\ref{lemma:avg} we show that this condition suffices to guarantee that either $(V,F_1)$ or $(V,F_2)$ has at most $|V|/8$ cycles. In Lemma~\ref{lemma:chordless} we show the condition holds for $F_2$ that is locally optimal for $F_1$, provided that all cycles in $F_1$ are {\em chordless}:
An edge $\{x,y\}$ is a {\it chord} for cycle $C$ in $(V, F_1)$, if $x,y\in C$, and $\{x,y\}\in E\setminus F_1$; a cycle will be referred to as chorded if it has at least one chord, and chordless otherwise.

A few more details are needed to deal with the general case when $F_1$ is not necessarily chordless; these are postponed to Section~\ref{tech}.

\begin{lemma}\label{lemma:avg}
Let $G=(V,E)$ be a simple cubic bipartite graph that has no potential 4-cycles, let $F_1$ and $F_2$ be 2-factors in $G$, such that
for any cycle $C$ in $(V, F_1)$, there exists a cycle $D$ in $(V, F_2)$ of size at least 10 such that $|V(C)\cap V(D)|\ge 4$.
Then either $(V, F_1)$ or $(V, F_2)$ has at most $|V|/8$ components.
\end{lemma}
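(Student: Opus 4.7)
The plan is to prove the lemma by a case analysis on $N_1$, where $N_i$ denotes the number of components of $(V,F_i)$. If $N_1\leq |V|/8$ we are immediately done, so the substantive task is to show that whenever $N_1\geq |V|/8$, the hypothesis forces $N_2\leq |V|/8$. The underlying intuition is that a large number of small components in $F_1$ must be ``charged'' to a large amount of node-mass in long cycles of $F_2$, which in turn caps the total number of cycles $N_2$.

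Assuming $N_1\geq |V|/8$, I would partition the cycles of $(V,F_2)$ into \emph{big} cycles (size $\geq 10$) and \emph{small} cycles (size $6$ or $8$; bipartiteness and the absence of potential $4$-cycles ensure every cycle in $F_2$ has even size at least $6$). Let $D_1,\ldots,D_m$ denote the big cycles with $d_i:=|V(D_i)|$, and for each $i$ let $k_i$ be the number of cycles $C$ in $(V,F_1)$ whose designated partner $D_C$ (supplied by the hypothesis) equals $D_i$. Since cycles of $F_1$ are pairwise node-disjoint and each contributes at least $4$ nodes to $V(D_C)$, these sets of ``contributed'' nodes are disjoint subsets of $V(D_i)$, yielding $\sum_i k_i=N_1$ and $4k_i\leq d_i$ for every $i$.

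The heart of the argument is the key inequality
\[
  d_i-6 \ \geq\ 2k_i \qquad \text{for every } i,
\]
which I would verify by a short case analysis on $k_i$: for $k_i\in\{0,1,2\}$ one has $d_i\geq 10$ and hence $d_i-6\geq 4\geq 2k_i$; for $k_i\geq 3$ the constraint $d_i\geq 4k_i$ gives $d_i-6\geq 4k_i-6\geq 2k_i$. Summing over $i$ then produces $\sum_i d_i\geq 6m+2N_1$. Since every small cycle has at least $6$ nodes and the small cycles collectively use $|V|-\sum_i d_i$ nodes, the number of small cycles is at most $(|V|-\sum_i d_i)/6$, and therefore
\[
  N_2 \ \leq\ m \;+\; \frac{|V|-\sum_i d_i}{6} \ \leq\ m \;+\; \frac{|V|-6m-2N_1}{6} \ =\ \frac{|V|}{6}-\frac{N_1}{3}.
\]

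Plugging in $N_1\geq |V|/8$ yields $N_2\leq |V|/6-|V|/24=|V|/8$, completing the argument. The main obstacle I expect is identifying the correct key inequality: using only $d_i\geq 4k_i$ gives $\sum_i d_i\geq 4N_1$, which just misses $|V|/8$ and produces a gap roughly of the form $|V|/8<N_1<5|V|/32$; using only $d_i\geq 10$ is similarly too weak. The crucial insight is that these two constraints must be combined into the single linear inequality $d_i-6\geq 2k_i$, exploiting the floor $d_i\geq 10$ precisely in the regime $k_i\leq 2$ where the disjointness bound is loose, and the disjointness bound $d_i\geq 4k_i$ in the regime $k_i\geq 3$ where the minimum-size bound is loose. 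Once this inequality is in place, the rest is routine double counting.
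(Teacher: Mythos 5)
Your proof is correct, and it takes a genuinely different route from the paper's. The paper assigns a fractional charge $\alpha(v)$ to each node based on the size $k$ of its $F_2$-cycle: $\alpha(v)=1/k$ for $k\neq 10$, while for size-$10$ cycles eight nodes get $1/12$ and two get $1/6$, distributed so that every $F_1$-cycle meeting such a $D$ in at least four nodes picks up four nodes with $\alpha=1/12$. From this it derives $\sum_{v\in C}\alpha(v)\le \tfrac16|V(C)|-\tfrac13$ per $F_1$-cycle $C$, and hence $K_2\le \tfrac16|V|-\tfrac13 K_1$. You instead charge each $F_1$-cycle to a designated big partner $D_i$ in $F_2$, note that disjointness gives $4k_i\le d_i$, and prove the single clean inequality $d_i-6\ge 2k_i$ by splitting on $k_i\le 2$ (where $d_i\ge10$ is the binding constraint) versus $k_i\ge 3$ (where $d_i\ge 4k_i$ is). Summing over big cycles and bounding small cycles trivially by size $\ge 6$ yields $N_2\le \tfrac16|V|-\tfrac13 N_1$ unconditionally — the same linear combination $\tfrac14 N_1+\tfrac34 N_2\le |V|/8$ that the paper proves, so the initial case split on $N_1\ge |V|/8$ is not even needed. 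The two proofs thus establish the identical weighted bound, but yours replaces the delicate per-node $\alpha$-allocation inside size-$10$ cycles (which requires checking that at most two $F_1$-cycles can claim four nodes from a ten-node cycle and arranging the $1/12$'s accordingly) with a single per-cycle inequality; this is arguably a cleaner and more transparent counting argument, at the modest cost of losing the suggestive ``node charging'' picture that the paper's $\alpha$-values provide.
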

\begin{proof}
Let $K_i$ be the number of components of $(V, F_i)$ for $i=1,2$. Note that it suffices to show that $\gamma K_1+(1-\gamma)K_2\le |V|/8$ for some $0\le \gamma\le1$.

In order to do this, we introduce a value $\alpha(v)$ for each node.
This value is set based on the size of the cycle containing $v$ in the second 2-factor, and they will satisfy $\sum_{v\in D} \alpha(v)= 1$ for every cycle $D$ in the second 2-factor $(V, F_2)$. Hence, we have that $\sum_{v\in V}\alpha(v)$ is equal to the number of cycles in $(V, F_2)$.
We will then show that the condition of the lemma guarantees that for a cycle $C$ in $(V, F_1)$,
\begin{equation}\sum_{v\in C}\alpha (v) \le \frac 16 |V(C)|-\frac13.\tag{*}\label{star}\end{equation}
This suffices to prove what we want: we have $K_2=\sum_{v\in V} \alpha(v)\le \frac 16|V| - \frac 13 K_1$, 
which is the same as $\frac14K_1+\frac34K_2\le \frac 18|V|$. 

The basic idea to setting the $\alpha$-values is that if 
$v$ is in a cycle $D$ in $(V, F_2)$ of size $k$, then we have $\alpha(v)=\frac1 k$.  The only exception to this rule is when $D$ has size 10; in this case we set $\alpha(v)$ for $v\in D$ to either $\frac16$ or $\frac1{12}$. There will be exactly $8$ nodes with $\alpha(v)=\frac1{12}$ and 2 nodes with $\alpha(v)=\frac16$. The nodes $v$ in $D$ with $\alpha(v)=\frac1{12}$ are chosen in such a way that, if there is a cycle $C$ in $(V, F_1)$ containing at least 4 nodes in $D$, then at least $4$ of the nodes in $V(C)\cap V(D)$ will have $\alpha(v)=\frac 1{12}$. 
It is possible to achieve this, since the fact that $D$ has 10 nodes implies that $(V, F_1)$ can contain at most two cycles that intersect $D$ in 4 or more nodes.

It is easy to see that (\ref{star}) holds: by the condition in the lemma, any cycle $C$ contains at least 4 nodes $v$ such that $\alpha(v)\le \frac 1 {12}$. Since we assumed in addition that $G$ has no potential 4-cycles, we also know that $\alpha(v)\le \frac16$ for all other $v\in V(C)$. Hence
$\sum_{v\in C}\alpha(v)\le \frac 16 (|V(C)|-4)+4\cdot\frac1{12} = \frac16|V(C)|-\frac13$.
\end{proof}

By Lemma~\ref{lemma:avg}, it is enough to find a 2-factor $F_2$ that satisfies that every cycle in the first 2-factor, $F_1$, has at least 4 nodes in some ``long'' cycle of $F_2$ (where ``long'' is taken to be size 10 or more).
The following lemma states that a locally optimal $F_2$ satisfies this condition, provided that $F_1$ is chordless. 

\begin{lemma}\label{lemma:chordless}
Let $G=(V,E)$ be a simple cubic bipartite graph that has no potential 4-cycles, let $F_1$ be a chordless 2-factor in $G$, and let $F_2$ be a 2-factor that is locally optimal with respect to $F_1$. 
Then for any cycle $C$ in $(V, F_1)$, there exists a cycle $D$ in $(V, F_2)$ of size at least 10 such that $|V(C)\cap V(D)|\ge 4$.
\end{lemma}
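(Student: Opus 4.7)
The plan is to set up coordinates on $C$, use bipartiteness and chordlessness to bound external path lengths, and then do a case analysis on the $F_2$-cycles through $C$, ruling out the bad case via a combinatorial argument from local optimality.

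Write $|C| = 2k$ (bipartite, and $|C| \neq 4$ by the no-potential-$4$-cycle assumption, so $k \geq 3$) and label $V(C) = \{v_0, \ldots, v_{2k-1}\}$ cyclically so that the pair edges of $C$ lying in $F_2$ are $p_i := v_{2i} v_{2i+1}$. Since $F_1$ is a $2$-factor of a cubic graph, $M := E \setminus F_1$ is a perfect matching. For each $v \in V(C)$, following $F_2$ from $v$ along its $M$-edge traces out a maximal \emph{external path} whose internal vertices lie outside $V(C)$ and which ends at another vertex of $V(C)$. Such a path alternates $M$-edges with pair edges of other $F_1$-cycles, hence has odd length, and since $F_1$ is chordless the length is at least $3$. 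A key subclaim I would prove first is that \emph{external paths between two $C$-adjacent vertices have length $\geq 5$}: a length-$3$ external path $v_a$--$p$--$q$--$v_b$ with $v_a, v_b$ $C$-adjacent makes $\{v_a, v_b, p, q\}$ a $4$-cycle of $G$, and by the no-potential-$4$-cycle hypothesis some outside vertex $x$ must be adjacent to two of these four; a short bipartite case-check leaves only $x \in \{v_{a \pm 1}, v_{b \pm 1}\}$ as candidates, each of which would force a second $M$-edge at $p$ or $q$, contradicting $M$ being a matching.

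With these length bounds in hand, classify each $F_2$-cycle $D$ through $C$ by the number $j$ of pair edges of $C$ that it uses, so that $|V(C) \cap V(D)| = 2j$ and $|D| = j + (\text{sum of its } j \text{ external path lengths})$. Three cases cover everything: (a) some $D$ uses $j \geq 3$ pairs, giving $|D| \geq 3 + 3 \cdot 3 = 12$ and $|V(C) \cap V(D)| \geq 6$; (b) some $D$ uses two $C$-adjacent pairs $p_i, p_{i+1}$, so one of its external paths connects the $C$-adjacent $v_{2i+1}, v_{2i+2}$ and has length $\geq 5$, giving $|D| \geq 2 + 5 + 3 = 10$ and $|V(C) \cap V(D)| = 4$; or (c) every $F_2$-cycle through $C$ uses at most two pairs and every $2$-pair cycle uses two pairs that are non-adjacent in the cyclic pair order. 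Cases (a) and (b) each exhibit the required $D$; the work is in ruling out case (c).

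The main obstacle will be case (c). Here I would encode the data of case (c) as an involution $\pi$ on $\{0,\ldots,k-1\}$ (fixed point $i$ corresponds to a $1$-pair cycle on $p_i$, $2$-cycle $(i,j)$ corresponds to a $2$-pair cycle on $\{p_i, p_j\}$) with the non-adjacency hypothesis becoming: every $2$-cycle $(i,j)$ of $\pi$ satisfies $|i-j| \not\equiv 1 \pmod k$. Letting $\rho(i) = i - 1 \bmod k$, one checks that the number of $F_2$-cycles through $C$ equals $\#\mathrm{cyc}(\pi)$ while the number of cycles of $F_2 \triangle E(C)$ through $C$ equals $\#\mathrm{cyc}(\pi\rho)$. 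I would prove by induction on the number of $2$-cycles of $\pi$ (using that adding a $2$-cycle $(i,j)$ drops $\#\mathrm{cyc}(\pi)$ by $1$ and changes $\#\mathrm{cyc}(\pi\rho)$ by $\pm 1$, since $\pi(ij)\rho = \pi\rho \cdot (i+1, j+1)$, according as $i+1, j+1$ lie in the same or different cycles of $\pi\rho$) that
\[
\#\mathrm{cyc}(\pi) - \#\mathrm{cyc}(\pi\rho) = f - 1 + 2g,
\]
where $f$ is the number of fixed points of $\pi$ and $g \geq 0$ counts the chord-additions that merged two cycles of $\pi\rho$ into one. If $f \geq 2$ the difference is already $\geq 1$. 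If $f \leq 1$ I would invoke the classical fact that every non-crossing perfect (or near-perfect) matching on a cycle contains a chord between adjacent vertices; under the non-adjacency hypothesis this forces the chord diagram of $\pi$ to be non-planar, so $g \geq 1$, and the difference is again $\geq 1$. Either way $\#\mathrm{cyc}(\pi) > \#\mathrm{cyc}(\pi\rho)$, contradicting the local optimality inequality $\#\mathrm{cyc}(\pi) \leq \#\mathrm{cyc}(\pi\rho)$ and ruling out case (c).
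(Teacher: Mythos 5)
Your overall strategy is genuinely different from the paper's. The paper argues by contradiction directly on $F_2' = F_2 \triangle E(C)$: it shows each node of a cycle $D$ meeting $C$ lands in an $F_2'$-cycle at least as large as $D$, that equality throughout forces every node of $C$ into a size-$8$ cycle built from two $C$-edges and two length-$3$ paths, and then kills this configuration with a parity argument on a bijection $p$ on $V(C)$ (showing $p(i)\equiv \ell\pm(i-1)$ leads to an odd cycle or to a chord). You instead reduce to an identity on permutations, $\#\mathrm{cyc}(\pi)-\#\mathrm{cyc}(\pi\rho)=f-1+2g$, and argue that the non-adjacency constraint forces this quantity to be positive. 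Several of your intermediate steps check out: the "length $\geq 5$ between $C$-adjacent endpoints" subclaim is correct once one also notes that $\{v_a,v_b,p,q\}$ really would be a potential $4$-cycle (one of $F_2$ or $F_2\triangle E(C)$ contains that $4$-cycle); bipartiteness does rule out the other pairing of external paths in case (b); the identity $(i\,j)\rho=\rho(i+1,j+1)$ and the resulting recursion for $g$ are correct; and the classical "innermost chord" fact does extend to the $f\le 1$ case.

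The genuine gap is the final assertion ``so $g \ge 1$.'' You have defined $g$ only implicitly through the inductive recursion as the count of merge steps, and you then claim that a crossing in the chord diagram of $\pi$ forces $g \ge 1$. That claim is true, but it is exactly the nontrivial statement that $g$ equals the genus of the one-vertex ribbon graph whose edges are the chords of $\pi$ and whose boundary components are the cycles of $\pi\rho$, equivalently that $\#\mathrm{cyc}(\pi\rho)\le m-1$ whenever two chords cross. Nothing in your proposal establishes this; merges can in principle happen or not happen at a given step depending on what has already been added, and while the total count $g$ is order-independent, the link between ``some pair of chords crosses'' and ``some step must merge'' needs an Euler-characteristic or interlacement argument that is not supplied. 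As written, the case-(c) contradiction therefore does not go through. (By contrast, the paper avoids this entirely by a direct analysis of where the labeling map $p$ can send consecutive indices, which is elementary.) To repair your proof you would need either to prove the genus identity for this family of ribbon graphs, or to replace the planarity appeal with a hands-on argument that two crossing non-adjacent chords, together with the shift $\rho$, force $\#\mathrm{cyc}(\pi)>\#\mathrm{cyc}(\pi\rho)$.
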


\begin{proof}
Suppose $F_2$ is locally optimal with respect to $F_1$, and assume by contradiction that there is some cycle $C$ in $(V, F_1)$ such that $(V, F_2)$ contains no cycle of size at least 10 that intersects $C$ in at least 4 nodes. 
Let $F_2'=F_2\triangle E(C)$; we will show that $(V, F_2')$ has fewer components than $(V, F_2)$, contradicting the fact that $F_2$ is locally optimal with respect to $F_1$.

Consider an arbitrary cycle $D$ in $(V, F_2)$ that intersects $C$. We will first show that any node $v$ in $D$ will be in a cycle in $(V, F_2')$ that is at least as large as $D$. This shows that the number of cycles of $(V, F_2')$ is at most the number of cycles of $(V, F_2)$. We then show that it is not possible that for every node, its cycle in $(V, F_2')$ is the same size as the cycle containing it in $(V, F_2)$.

If $D$ contains exactly one edge, say $e$, in $C$, then in $(V, F_2')$, the edge $e$ in $D$ is replaced by an odd-length path. Hence, in this case the nodes in $D$ will be contained in a cycle in $(V, F_2')$ that is strictly larger than $D$. 

If $D$ contains $k>1$ edges in $C$, then $D$ has size at most 8, since otherwise $D$ contradicts our assumption that no cycle exists in $(V, F_2)$ of size at least 10 that intersects $C$ in at least 4 nodes. We now show this implies that $D$ has size exactly 8 and $k=2$:
The size of $D$ is either 6 or 8 since $G$ has no potential 4-cycles. Note that $D$ alternates edges in $C$ and odd-length paths in $(V, F_2\setminus E(C))$. Since $C$ is chordless, the paths cannot have length 1 and must thus have length at least 3.
We thus have that $D$ must consist of exactly two edges from $C$, say $e_1, e_2$, separated by two paths of length 3, say $P_1, P_2$.

Since $P_1$ and $P_2$ do not contain edges in $C$, $(V, F_2')$ also contains the edges in $P_1$ and $P_2$. Hence, to show that all nodes in $D$ are in cycles of size at least 8 in $(V, F_2')$, it now suffices to show that the cycles containing $P_1$ and $P_2$ in $(V, F_2')$ have size at least 8.
Consider the cycle in $(V, F_2')$ containing $P_1$; besides $P_1$, the cycle contains another path, say $P_1'$, connecting the endpoints of $P_1$, and this path must have odd length $\ge 3$ since $G$ is bipartite and has no potential 4-cycles. Furthermore, $P_1'$ starts and ends with an edge in $C$, by definition of $F_2'=F_2\triangle E(C)$. Note that $P_1'$ thus cannot have length 3, as this would imply that the middle edge in $P_1'$ is a chord for $C$. So $P_1'$ has length at least 5, and the cycle in $(V, F_2')$ containing $P_1$ thus has size at least 8. Similarly, the cycle containing $P_2$ in $(V, F_2')$ has size at least 8.

We have thus shown that all nodes in $D$ are in cycles of size at least $|V(D)|$ in $(V, F_2')$, and hence, $(V, F_2')$ has at most as many cycles as $(V, F_2)$. 
Furthermore, it follows from the argument given above that the number of cycles in $(V, F_2)$ and $(V, F_2')$ is the same only if all nodes in $C$ are in cycles of size 8 in both $(V, F_2)$ and $(V, F_2')$ and each such cycle consists of two edges from $E(C)$ and two paths of length 3 in $(V, F_2\setminus E(C))$.

We now show by contradiction that the latter is impossible. Suppose $C$ is such that both in $(V, F_2)$ and in $(V, F_2')$ every node in $V(C)$ is contained in a cycle containing two edges from $C$. Then $|V(C)|$ must be a multiple of 4, say $|V(C)|=4k$.
Let the nodes of $C$ be labeled $1,2, \ldots, 4k\equiv 0$, such that $\{2i+1,2i+2\}\in F_2$ and $\{2i, 2i+1\}\in F_2'$ for $i=0, \ldots, 2k-1$.
We also define a mapping $p(i)$ for every $i=1,\ldots, 4k$, such that $(V, F_2\setminus E(C))$ contains a path (which, by our assumption has length 3) from $i$ to $p(i)$ for $i=1,\ldots, 4k$. 
Observe that by the definition of the mapping, $p(p(i))$ must be equal to $i$ (mod $4k$) for $i=1,\ldots, 4k$.

Let $p(1)=\ell$, then the fact that edge $\{1,2\}$ is in a cycle with one other edge from $C$ in $(V, F_2)$ implies that either $\{\ell, \ell+1\}\in F_2$ or $\{\ell, \ell-1\}\in F_2$, and that either $p(2)=\ell+1$ or $p(2)=\ell-1$.
In the first case, edge $\{2,3\}$ must be in a cycle with $\{\ell+1, \ell+2\}$ in the second 2-factor $(V, F_2')$, and thus $p(3)=\ell+2$. In the second case, $\{2,3\}$ must  be in a cycle with $\{\ell-1, \ell-2\}$ in $(V, F_2')$, and thus $p(3)=\ell-2$. 
Repeating the argument shows that either $p(i)\equiv\ell+(i-1)$ (mod $4k$) for $i=1, \ldots, 4k$, or $p(i)\equiv\ell-(i-1)$ (mod $4k$) for $i=1,\ldots, 4k$.

The first case gives a contradiction to the fact that $G$ is bipartite: note that $p(p(i))\equiv 2\ell+i-2$ (mod $4k$) and this must be equal to $i$. Hence, $\ell \equiv 1$ (mod $2k$); in other words, $\ell$ is odd, which cannot be the case since $(V, F_2\setminus E(C))$ has a path of length 3 from $1$ to $\ell$ (since $p(1)=\ell$) and if $\ell$ were odd, then $C$ would have a path of even length from $1$ to $\ell$, and thus $G$ would contain an odd cycle.

Now suppose that $p(i)\equiv \ell-(i-1)$ (mod $4k$) for $i=1,\ldots, 4k$. From the previous argument we know that $\ell$ must be even, since otherwise $G$ is not bipartite. But then $p(\ell/2)=\ell-(\ell/2-1)=\ell/2+1$. In other words, node $\ell/2$ is connected to node $\ell/2+1$ in $(V, F_2\setminus E(C))$. But since the edge $\{\ell/2,\ell/2+1\}$ is either in $F_2$ or in $F_2'$, $\{\ell/2,\ell/2+1\}$ is the only edge from $C$ in its cycle in either $(V, F_2)$ or $(V, F_2')$, contradicting the assumption on $C$.
\end{proof}

It may be the case that Lemma~\ref{lemma:chordless} also holds for cycles in $(V,F_1)$ that do have chords, but we have not been able to prove this. Instead, there is a simple alternative operation that ensures that a cycle in $(V, F_1)$ with a chord intersects at least one ``long'' cycle of size at least 10 in $(V, F_2)$ in 4 or more nodes. The algorithm described next will add this operation, and for technical reasons it will only modify $F_2$ with respect to a cycle $C$ in $(V, F_1)$ if the cycle $C$ does not yet intersect a long cycle in $(V, F_2)$ in 4 or more nodes.

\subsection{A 2-Factor with Average Cycle Size 8}\label{tech}
We give our algorithm in Algorithm~\ref{fig:cubicbip}. 
The algorithm fixes a 2-factor $F_1$ and initializes $F_2$ to be a 2-factor that contains all edges in $E\setminus F_1$. The algorithm then proceeds to modify $F_2$; note that $F_1$ is not changed.
Figure~\ref{fig:chord} illustrates the modification to $F_2$ in the case of a chorded cycle $C_i$.

\begin{algorithm}[!h]
\caption{Approximation Algorithm for Cubic Bipartite TSP}\label{fig:cubicbip}
{
Let $G=(V,E)$ be a bipartite cubic graph, with potential 4-cycles contracted using Lemma~\ref{lemma:four}.\;

Let $F_1$ be an arbitrary 2-factor in $G$, and let $C_1, \ldots, C_k$ be the cycles in $(V, F_1)$.\;

For each cycle $C_i$ in $(V,F_1)$, let $M(C_i)\subseteq E(C_i)$ be a perfect matching on $V(C_i)$.\;

Initialize $F_2 = (E\backslash F_1)\cup \bigcup_{i=1}^k M(C_i) $.\;

\While{there exists a cycle $C_i$ such that $|V(C_i)\cap V(D)|<4$ for all cycles $D$ in $(V, F_2)$ of size at least 10}
{
	\eIf {$C_i$ is a chordless cycle}
		{
		$F_2\leftarrow F_2\triangle E(C_i)$.\;
		}	
		{
		Let $\{x, y\}$ be a chord for $C_i$, let $P_1,P_2$ be the edge disjoint paths in $C_i$ from $x$ to $y$.\;
		Relabel $P_1$ and $P_2$ if necessary so that $P_1$ starts and ends with an edge in $F_1\setminus F_2$.\;
		$F_2\leftarrow \left(F_2\triangle E(P_1)\right) \setminus \{x,y\}$.\;\label{alg:pivot}
		}
}
Uncontract the 4-cycles in $(V,F_j)$ for $j=1,2$ using Lemma~\ref{lemma:four}.\;
Return the 2-factor among $F_1, F_2$ with the smaller number of components.\;
}		
\end{algorithm}
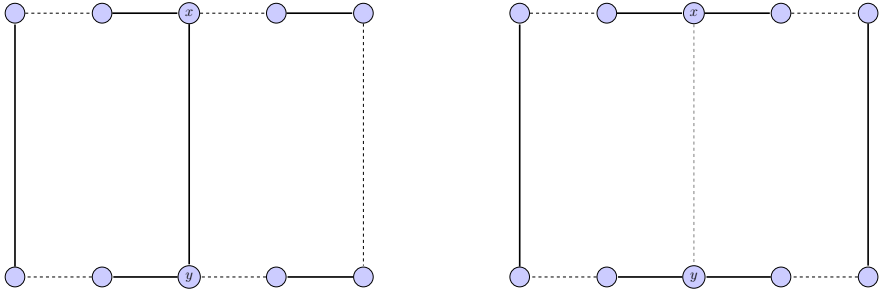
\begin{figure}[h!]\begin{center}
\resizebox{1.95in}{1.5in}{
\begin{tikzpicture}[->,>=stealth',shorten >=1pt,auto,node distance=3cm,
  thick,main node/.style={circle,fill=blue!20,draw,font=\sffamily\Large\bfseries},aux node/.style={}]

  \node[main node] (1) {$\ \ $};
   \foreach \name/\pos/\label in {{2/1/\ \ }, {3/2/x}, {4/3/\ \ }, {5/4/\ \ }}
       \node[main node] (\name) [right of = \pos] {$\label$};

  \foreach \name/\pos/\label in {{2a/2/}, {2b/2a/}, {4a/4/}, {4b/4a/}}
       \node[aux node] (\name) [below of = \pos] {$\label$};

  \foreach \name/\pos/\loc/\label in {{9/2b/below of/\ \ }, {7/4b/below of/\ \ }, {10/9/left of/\ \ }, {8/7/left of/y}, {6/7/right of/\ \ }}
       \node[main node] (\name) [\loc = \pos] {$\label$};
       
\path[-,every node/.style={font=\sffamily\Large}]
						
	\foreach \source/ \dest in {1/2, 3/4,5/6,7/8,9/10} {
		(\source) edge [dashed] node  {} (\dest)
	}
						
	\foreach \source/ \dest in {2/3,4/5,6/7,8/9,10/1} {
		(\source) edge [ultra thick] node  {} (\dest)
	}
	
	(3) edge [ultra thick] node {$$} (8);
;
\end{tikzpicture}}
\hspace*{1.5cm}
\resizebox{1.95in}{1.5in}{
\begin{tikzpicture}[->,>=stealth',shorten >=1pt,auto,node distance=3cm,
  thick,main node/.style={circle,fill=blue!20,draw,font=\sffamily\Large\bfseries},aux node/.style={}]

  \node[main node] (1) {$\ \ $};
   \foreach \name/\pos/\label in {{2/1/\ \ }, {3/2/x}, {4/3/\ \ }, {5/4/\ \ }}
       \node[main node] (\name) [right of = \pos] {$\label$};

  \foreach \name/\pos/\label in {{2a/2/}, {2b/2a/}, {4a/4/}, {4b/4a/}}
       \node[aux node] (\name) [below of = \pos] {$\label$};

  \foreach \name/\pos/\loc/\label in {{9/2b/below of/\ \ }, {7/4b/below of/\ \ }, {10/9/left of/\ \ }, {8/7/left of/y}, {6/7/right of/\ \ }}
       \node[main node] (\name) [\loc = \pos] {$\label$};
       
\path[-,every node/.style={font=\sffamily\Large}]

	\foreach \source/ \dest in {2/3,3/4,5/6,7/8,8/9,10/1} {
		(\source) edge [ultra thick] node  {} (\dest)
	}

	\foreach \source/ \dest in {1/2,4/5,6/7,9/10} {
		(\source) edge [dashed]  node  {} (\dest)
	}
	(3) edge [draw=gray, very thin, dashed] node {$$} (8);
;
\end{tikzpicture}}
\end{center}
\caption{The figure on the left shows a chorded cycle $C$ in $F_1$ of size 10 and all edges in $G$ that have both endpoints in $C$. The dashed edges are in $F_1\setminus F_2$, and non-dashed edges are in $F_2$ (where not all edges in $F_2$ that are incident on the nodes are shown). The figure on the right shows how Algorithm~\ref{fig:cubicbip} would update $F_2$. 
}\label{fig:chord}
\end{figure}

We need to prove that the set of edges $F_2$ remains a 2-factor throughout the course of the algorithm, that the algorithm terminates, and that upon termination, either $(V, F_1)$ or $(V, F_2)$ has at most $|V|/8$ components. 
The latter is clear: {\it if} the algorithm terminates, then the condition of Lemma~\ref{lemma:avg} is satisfied, and therefore one of the two 2-factors has at most $|V|/8$ components.

To show that $F_2$ is a 2-factor and that the algorithm terminates is a little more subtle. In order to show this, it will be helpful to know that each cycle in $(V, F_i)$ alternates edges in $F_i\cap F_{i+1}$ and edges in $F_i\setminus F_{i+1}$ for $i=1,2$ (where subscripts are modulo 2, so $F_3\equiv F_1$). This is true initially, however, it is not the case that this property continues to hold for all cycles.
We will show that it does hold in certain cases, which turn out to be exactly the cases ``when we need it''. In the following, we will say a cycle or path in $(V, F_i)$ is {\it alternating} (for $F_1$ and $F_2)$ if it alternates edges in $F_i\cap F_{i+1}$ and $F_i\setminus F_{i+1}$.
We will say that a cycle $C$ in $(V, F_1)$ is {\it violated} if there exists no $D$ of size at least 10 in $(V, F_2)$ such that $|V(C)\cap V(D)|\ge 4$.

\begin{lemma}\label{lemma:invariant}
Algorithm~\ref{fig:cubicbip} maintains that $F_2$ is a 2-factor that satisfies the following properties:
\begin{enumerate}
\item[(1)]
if $C$ in $(V, F_1)$ is violated, then $C$ is alternating for $F_1$ and $F_2$;
\item[(2)]
if $D$ in $(V, F_2)$ is not alternating for $F_1$ and $F_2$, then $D$ has size at least 10.
\end{enumerate}
\end{lemma}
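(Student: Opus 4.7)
The plan is to prove both invariants simultaneously by induction on the number of iterations of the \texttt{while} loop. For the base case, $F_2=(E\setminus F_1)\cup\bigcup_i M(C_i)$ is a 2-factor because $E\setminus F_1$ and each $M(C_i)$ are matchings whose union covers every vertex exactly twice. Each $C_i$ alternates edges of $M(C_i)\subseteq F_1\cap F_2$ and $E(C_i)\setminus M(C_i)\subseteq F_1\setminus F_2$, so (1) holds for every $F_1$-cycle, and a quick degree check shows that at every node the two $F_2$-edges consist of one edge in $F_1\cap F_2$ and the unique $E\setminus F_1$-edge at that node, so every $F_2$-cycle is alternating and (2) is vacuous.

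For the inductive step, the algorithm picks a violated cycle $C_i$, which by (1) is alternating. In the chordless subcase, alternation of $C_i$ implies every $v\in V(C_i)$ has exactly one $E(C_i)$-edge in $F_2$, so $F_2\triangle E(C_i)$ preserves degrees and $C_i$ stays alternating (the $F_1\cap F_2$ and $F_1\setminus F_2$ labels along $C_i$ simply swap); at every $v\in V(C_i)$ the two $F_2$-edges are still one in $F_1\cap F_2$ and one in $F_2\setminus F_1$, so no node becomes newly non-alternating. Since the set of non-alternating nodes is unchanged and all of them lie outside $V(C_i)$, a cycle-tracing argument (using chordlessness of $C_i$, which forces every $E\setminus F_1$-edge incident to $V(C_i)$ to leave $V(C_i)$, together with the ``$F_1$-segment'' structure of non-alternating cycles) confirms that (2) continues to hold.

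The chorded subcase is the main obstacle. Two preliminaries are needed. First, bipartiteness forces $|E(P_1)|$ to be odd, since $P_1$ together with the chord $\{x,y\}$ forms an even cycle in $G$; this guarantees the required relabeling is possible and that $P_1$ begins and ends with an edge in $F_1\setminus F_2$. Second, combining alternation of $C_i$ with $\deg_G(x)=\deg_G(y)=3$ forces the two $F_2$-edges at $x$ to be the $P_2$-edge at $x$ (in $F_1\cap F_2$) and the chord $\{x,y\}$ (in $F_2\setminus F_1$), and symmetrically at $y$. A direct degree check at $x$, $y$, and each internal node of $P_1$ then shows that $(F_2\triangle E(P_1))\setminus\{x,y\}$ is again a 2-factor. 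After the update the only nodes whose alternation status changes are $x$ and $y$: both become non-alternating because their two $F_2$-edges are now the $P_1$-edge and $P_2$-edge in $C_i$. Consequently $C_i$ is the only $F_1$-cycle that could become non-alternating, and the only potentially new non-alternating $F_2$-cycle is the one through $x$ and $y$.

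The crux is to identify this new $F_2$-cycle $D$ and establish $|V(C_i)\cap V(D)|\ge 4$ and $|V(D)|\ge 10$. Tracing the updated $F_2$-edges shows that $D$ is the concatenation of the old $F_2$-cycle that contained the chord (which passed through $x$, $y$, and the $P_2$-edges at $x$ and $y$) with the $(|E(P_1)|-1)/2$ old $F_2$-cycles that each contained a consecutive pair of internal $P_1$-nodes, strung together by the newly added odd-indexed edges of $P_1$. Since $G$ has no potential 4-cycles and these constituent old cycles are vertex-disjoint, each has size at least $6$, so $D$ contains all $|E(P_1)|+1\ge 4$ nodes of $V(P_1)\subseteq V(C_i)$ and has size at least $12$. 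This proves $C_i$ is no longer violated (preserving (1) for $C_i$) and that the sole newly non-alternating $F_2$-cycle is large enough for (2). For any $F_1$-cycle $C\ne C_i$, edges of $C$ are untouched, so its alternation status is preserved; any old $F_2$-cycle witnessing non-violation of $C$ either survives unchanged or is absorbed into $D$, and likewise pre-existing non-alternating $F_2$-cycles either survive or are absorbed into $D$ (whose size exceeds theirs). Hence both invariants persist after the iteration.
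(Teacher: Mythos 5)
Your base case and your analysis of the 2-factor property (including the degree checks at $x$, $y$, and the internal $P_1$-nodes, and the relabeling of $P_1$ via parity) match the paper, and your observation that after the chorded update only $x$ and $y$ change "alternation status" is also in the spirit of the paper's argument. But your handling of the crux of the chorded case is genuinely different from the paper's and has a real gap.

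You describe the new $F_2$-cycle $D$ through $x$ as the concatenation of the old $F_2$-cycle containing the chord with $(|E(P_1)|-1)/2$ further old $F_2$-cycles, one for each even-indexed edge of $P_1$, and conclude $|V(D)|\ge 12$ on the grounds that these "constituent old cycles are vertex-disjoint, each has size at least $6$." The problem is that nothing forces these constituents to be \emph{distinct} cycles: the old $F_2$-cycle through the chord $\{x,y\}$ may very well be the same cycle as the one through some consecutive pair of internal $P_1$-nodes (or two of the pair-cycles may coincide). When they coincide, removing two of their edges produces two paths, and the "concatenation" picture collapses; the resulting $D$ can have fewer nodes than your count claims. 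Your size bound of $12$ is therefore not justified, and with it the claim that $C_i$ is no longer violated. Relatedly, your tracing simply asserts that $x$ and $y$ land in a single new cycle, but this requires an argument: the paper gets it from a parity/bipartiteness contradiction, using the fact that every maximal path of $(V,F_2\setminus E(C))$ has odd length, and that fact is exactly the auxiliary Claim the paper isolates and proves from properties (1) and (2) before treating either subcase. You never state or use this Claim.

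The same gap appears in the chordless subcase. You establish the node-local fact that no new "non-alternating node" is created on $V(C_i)$, which is correct, but then you wave at "a cycle-tracing argument (using chordlessness of $C_i$ ... together with the `$F_1$-segment' structure of non-alternating cycles)" to conclude property (2). This is precisely where the paper's Claim does the work: any non-alternating cycle of the new $F_2$ either survives unchanged (so size $\ge 10$ by induction) or incorporates a non-alternating path of $(V,F_2\setminus E(C))$, which the Claim shows has length at least $9$. Without stating and proving something equivalent, your chordless case is not closed.

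In short: the overall induction and the bookkeeping near $x$ and $y$ are fine, but the two places where the proof actually has to produce a long cycle --- the length-$\ge 10$ bound in the chorded case and the propagation of property (2) in the chordless case --- are argued by a decomposition into "old cycles" that may coincide, and by an unproved tracing claim, rather than by the odd-path-length Claim and the two-odd-paths-from-$x$-to-$y$ argument the paper uses. You would need to either prove the constituent cycles are distinct (they need not be), or switch to the paper's route of first establishing the odd-length structure of $(V,F_2\setminus E(C))$ and then bounding the two $x$--$y$ paths.
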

\begin{proof}
We prove the lemma by induction on the algorithm. Initially, $F_2$ consists of $E\setminus F_1$ and 
$\bigcup_{i=1}^k M(C_i)$, which are two edge-disjoint perfect matchings on $V$. Hence, $F_2$ is a 2-factor, and the two properties hold for all cycles in $(V, F_1)$ and $(V, F_2)$.
\newtheorem{myclaim}{Claim}

Suppose the lemma holds and we modify $F_2$ by considering some violated cycle $C$. The two properties of the lemma imply the following:
\begin{myclaim}\label{claim}
If $C$ is violated, then $(V, F_2\setminus E(C))$ consists of even cycles and odd-length paths, where paths that are not alternating for $F_1$ and $F_2$ have length at least 9.
\end{myclaim}
\begin{proof_of_claim}
For each path in the graph $(V, F_2\setminus E(C))$ there exists some cycle $D$ in $(V, F_2)$ such that the path results when removing $E(C)\cap E(D)$ from $D$. If $D$ is alternating for $F_1$ and $F_2$, then the path must have the same property, and it must start and end with an edge in $F_2\setminus F_1$. Hence, the path must have odd length if $D$ is alternating.
If $D$ is not alternating then $D$ has size at least 10 by Property (2), so $C$ can have at most one edge in common with $D$, since otherwise $C$ is not violated. Hence, the path obtained by removing the unique edge in $E(C)\cap E(D)$ has length at least 9, and its length must be odd, since $D$ is an even cycle.
\end{proof_of_claim}\\

If $C$ is chordless, then we modify $F_2$ to $F_2'=F_2\triangle E(C)$. Clearly, $F_2'$ is again a 2-factor, and Property (1) remains satisfied.
Furthermore, any cycle in $(V, F_2')$ that is not alternating for $F_1$ and $F_2'$ either also existed in $(V, F_2)$ and hence it has size at least 10, since Property (2) holds for $F_2$, or the cycle contains a path in $(V, F_2\setminus E(C))$ that is not alternating for $F_1$ and $F_2$, and this path has length at least 9 by the claim. So Property (2) holds for $F_2'$.

Now consider the modification of $F_2$ when considering a chorded cycle $C$ in $(V, F_1)$. Let $P_1$ be as defined in the algorithm. First consider $F_2'=F_2\triangle E(P_1)$; every node is incident to two edges in $F_2'$, except for $x$ and $y$, which are incident to three edges in $F_2'$, namely two edges in $E(C)$ plus the edge $\{x,y\}$. Hence, removing $\{x,y\}$ will give a new 2-factor, say $F_2''$. The modification from $F_2$ to $F_2''$ is exactly the modification made to $F_2$ by the algorithm.

We now show that the two properties are satisfied.
Clearly, $C$ is not alternating for $F_1$ and $F_2''$, so in order to maintain Property (1), we need to show that $C$ is no longer violated. To do this, we show that $(V, F_2'')$ contains a cycle of size at least 10 that contains $x$, $y$ and their 4 neighbors in $C$. 
First, suppose by contradiction that after removing $\{x,y\}$, $x$ and $y$ are not in the same cycle. Consider the component of $(V, F_2'')$ containing $x$: starting from $x$, it alternates edges in $E(C)$ and paths in $F_2\setminus E(C)$, starting and ending with an edge in $E(C)$. By Claim~\ref{claim} the paths in $(V, F_2\setminus E(C))$ have odd length, and hence the component containing $x$ must be an odd cycle, contradicting the fact that $G$ is bipartite. So, $x$ and $y$ must be in the same cycle in $(V, F_2'')$.
This cycle must thus consist of two odd-length paths from $x$ to $y$, each starting and ending with an edge in $F_2''\cap F_1$. These paths cannot have length $3$, because this would imply that the path plus the edge $\{x,y\}$ would form a potential 4-cycle. Hence, the cycle in $(V, F_2'')$ containing $x$ and $y$ has size at least 10. 

For Property (2), note that any cycle $D$ in $(V, F_2'')$ that is not alternating for $F_1$ and $F_2''$ either (i) existed in $(V, F_2)$ and therefore has size at least 10, or (ii) contains $x$ and $y$ and we showed above that this cycle has size at least 10, or (iii) it contains a path in $(V, F_2\setminus E(C))$ that is not alternating for $F_1$ and $F_2$, and by the claim this path has length at least 9. Hence, Property (2) is satisfied by $F_2''$.
\end{proof}

\begin{lemma}\label{lemma:2factor}
Given a cubic bipartite graph $G=(V,E)$, Algorithm~\ref{fig:cubicbip} returns a 2-factor in $G$ with at most $|V|/8$ components.
\end{lemma}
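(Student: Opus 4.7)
My plan is to assemble the structural lemmas already proved and add a termination argument. By Lemma~\ref{lemma:four} I can reduce to the case where $G$ has no potential 4-cycles, since the contraction and uncontraction steps in Algorithm~\ref{fig:cubicbip} preserve the component count. Lemma~\ref{lemma:invariant} guarantees that $F_2$ is a 2-factor at every iteration of the while loop. If the loop terminates, the failure of its guard is exactly the hypothesis of Lemma~\ref{lemma:avg}, which then yields $\min\{K_1,K_2\}\le |V|/8$; since the algorithm returns the better of $F_1,F_2$, this gives the required component bound.

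Termination is the only piece not immediately in hand. I would use the lexicographic potential $(\phi,K_2)$ with primary coordinate $\phi=|F_2\cap(E\setminus F_1)|$ and secondary coordinate $K_2$, the number of components of $(V,F_2)$. Initially every edge of $E\setminus F_1$ is in $F_2$, so $\phi$ starts at $|E\setminus F_1|=|V|/2$. In a chorded iteration the update $F_2\leftarrow(F_2\triangle E(P_1))\setminus\{x,y\}$ modifies only edges of $E(P_1)\subseteq F_1$ via the symmetric difference (hence $\phi$ is preserved) and then strips a single edge of $E\setminus F_1$, strictly decreasing $\phi$ by one. In a chordless iteration the update $F_2\leftarrow F_2\triangle E(C)$ toggles only edges of $F_1$, keeping $\phi$ fixed; the strict decrease of $K_2$ is then exactly the content of the last three paragraphs of the proof of Lemma~\ref{lemma:chordless}, whose argument rules out the tight configuration in which $K_2$ would be preserved using only that $G$ is bipartite and $C$ is violated and chordless. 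In either branch the lex potential strictly decreases, and since $\phi\ge 0$ and $K_2\ge 1$ the loop must halt after finitely many iterations.

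The main obstacle will be the bookkeeping around the chorded branch: I need to verify that the chord $\{x,y\}$ selected by the algorithm still lies in $F_2$ at the moment of the operation (so that removing it actually decreases $\phi$ and the degree argument of Lemma~\ref{lemma:invariant} is valid) and that the relabeling of $P_1,P_2$ so that $P_1$ starts and ends with edges in $F_1\setminus F_2$ is always possible. Both points should follow from the alternation Property~(1) in Lemma~\ref{lemma:invariant}, combined with a small case analysis on the parity of the length of $P_1$ in $C_i$ using the bipartiteness of $G$. I also have to extract from the proof of Lemma~\ref{lemma:chordless} the statement that $K_2(F_2\triangle E(C))<K_2(F_2)$ under the single assumption that $C$ is a violated chordless cycle (rather than that all of $F_1$ is chordless and $F_2$ globally locally optimal), but this is straightforward since only $C$'s chordlessness is used in that argument. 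Once these details are in place, the proof assembles as a short composition of Lemmas~\ref{lemma:four}, \ref{lemma:invariant}, and \ref{lemma:avg}.
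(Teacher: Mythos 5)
Your proof is correct, and the wrapper is the same as the paper's: reduce to the no-potential-4-cycle case via Lemma~\ref{lemma:four}, use Lemma~\ref{lemma:invariant} to keep every iterate $F_2$ a 2-factor, and read off the bound from Lemma~\ref{lemma:avg} once the loop guard fails. The termination argument, however, takes a genuinely different route. The paper proves a monotonicity claim: once a cycle $C$ of $(V,F_1)$ ceases to be violated, it stays non-violated for the rest of the run, because any later pivot is triggered by a violated $C'$, which by Property~(1) of Lemma~\ref{lemma:invariant} is alternating and hence meets any witness cycle $D$ of size $\ge 10$ in at most one edge, so $D$ can only grow; combined with the observation that processing a violated $C$ makes $C$ non-violated (chorded case directly by Property~(1), chordless case by the strict decrease of $K_2$ you cite), this caps the iteration count at the number of cycles of $(V,F_1)$, i.e.\ at most $|V|/6$. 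Your lexicographic potential $(\phi,K_2)$ with $\phi=|F_2\cap(E\setminus F_1)|$ sidesteps the monotonicity claim entirely, at the small cost of a weaker (but still polynomial, $O(|V|^2)$) iteration bound, since $K_2$ may jump up when $\phi$ drops. Your two flagged verification points do go through as you anticipate: each node of a violated (hence alternating) cycle $C$ already has two $F_1$-edges, exactly one of which lies in $F_2$, so its third incident edge --- the chord --- must be in $F_2$; and bipartiteness forces both $x$--$y$ arcs of $C$ to have odd length, so alternation puts exactly one of them between two edges of $F_1\setminus F_2$, making the relabeling of $P_1$ well-defined. One item worth making explicit when you extract the strict $K_2$-decrease from the proof of Lemma~\ref{lemma:chordless}: its structural precondition --- that every component of $(V,F_2\setminus E(C))$ is an even cycle or odd-length path --- must be re-established mid-algorithm, which is exactly what Claim~\ref{claim} inside the proof of Lemma~\ref{lemma:invariant} provides (via Property~(2)); the paper cites that claim for this purpose, and your argument should too.
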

\begin{proof}
By Lemma~\ref{lemma:four} it suffices to show that the current lemma holds if $G$ has no potential 4-cycles. By the termination condition of Algorithm~\ref{fig:cubicbip} and Lemma~\ref{lemma:avg}, the 2-factor returned by the algorithm does indeed have at most $|V|/8$ components, so it remains to show that the algorithm always returns a 2-factor.

By Lemma~\ref{lemma:invariant}, the algorithm maintains two 2-factors $F_1$ and $F_2$.
Observe that if a cycle $C$ is not violated, then this continues to hold throughout the remainder of the algorithm: 
Let $D$ be a cycle of size at least 10 in $(V, F_2)$ such that $|V(C)\cap V(D)|\ge 4$. The only possible changes to $D$ will be caused by a violated cycle $C'$, which necessarily contains at most one edge in $D$: by Lemma~\ref{lemma:invariant} $C'$ is alternating for $F_1$ and $F_2$, so if $C'$ contains more than one edge in $D$, $C'$ cannot be violated. The modification of $F_2$ with respect to $E(C')$ can therefore only cause the cycle $D$ to become a larger cycle $D'$ where $V(D')\supseteq V(D)$. So $D'$ will have size at least 10, and $|V(C)\cap V(D')|\ge 4$.

It remains to show that if we modify $F_2$ with respect to some violated cycle $C$, then $C$ is not violated for the new 2-factor $F_2'$. If $C$ is not chordless, then this holds because $C$ is not alternating for $F_1$ and the new 2-factor $F_2'$, so by Lemma~\ref{lemma:invariant}, $C$ is not violated.
If $C$ is chordless and violated, then by Claim~\ref{claim}, $(V, F_2\setminus E(C))$ consists of even cycles and odd-length paths. The proof of Lemma~\ref{lemma:chordless} then shows that taking the symmetric difference of $F_2$ with $E(C)$ (strictly) reduces the number of components. This implies that for $F_2'=F_2\triangle E(C)$, cycle $C$ is not violated: otherwise, we could apply the same arguments to show that $(V, F_2\triangle E(C)\triangle E(C))$ has strictly fewer cycles than $(V, F_2)$, but this is a contradiction since $F_2\triangle E(C)\triangle E(C)=F_2$.
\end{proof}

In the appendix, we give an example on 48 nodes that shows our analysis of Algorithm~\ref{fig:cubicbip} is tight.
In fact, the example is also tight for the local improvement heuristic from Section~\ref{local} and for the local improvement heuristic we obtain if we allow Algorithm~\ref{fig:cubicbip} to modify $F_2$ for cycles $C$ that are chorded and/or do have at least two edges in a cycle of size 10 or more in $(V, F_2)$.

\section{Cubic graphs}\label{sec:cub}
We now consider cubic graphs, in other words, we drop the requirement that the graph is bipartite. We assume the graph is 2-connected.
The best known approximation result for graph-TSP on a 2-connected cubic graphs $G=(V,E)$ is due to Correa, Larr\'e and Soto~\cite{CorreaLS15} who show how to find in polynomial time a tour of length at most $\left(\frac43-\frac1 {61236}\right)|V|$.

One obstacle for their techniques are chorded 4-cycles, i.e., a set of 4 nodes $(v_1,v_2,v_3,v_4)$ such that the subgraph of $G$ induced by $\{v_1,v_2,v_3,v_4\}$ contains edges $\{v_1,v_2\}, \{v_2,v_3\}$,
$\{v_3,v_4\}, \{v_4,v_1\}$ and the ``chord'' $\{v_2,v_4\}$.
In fact, Correa et al. prove the following.

\begin{lemma}[Correa, Larr\'e and Soto~\cite{CorreaLS15}]\label{correals}
Consider a graph-TSP instance on a 2-connected cubic graph $G=(V,E)$, and let $B$ be the set of nodes in $G$ contained in a chorded 4-cycle. Then, we can find in polynomial time a tour of length at most $\frac43|B| + (\frac43-\frac1{8748})(|V\backslash B|) +2$.
\end{lemma}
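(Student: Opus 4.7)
The approach I would take follows the now-standard two-factor-plus-tree framework from Boyd, Sitters, Van der Ster, and Stougie: find a 2-factor $F$ in $G$ with $k$ components, contract each component to obtain a graph $G/F$, take a spanning tree $T$ of $G/F$, double it, and output a shortcutting of the Eulerian walk in $F \cup 2T$. This yields a tour of length at most $|V| + 2(k-1)$. To match the bound of the lemma it therefore suffices to construct, in polynomial time, a 2-factor with at most $\frac{|V|}{6} - \frac{|V\setminus B|}{2\cdot 8748} + 2$ components.

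I would begin by preprocessing $G$ to remove triangles. A triangle in a cubic graph has exactly three external edges, so it can be contracted to a single vertex; solving the instance recursively and then expanding each triangle back increases the tour by a bounded additive amount (absorbed in the ``$+2$'' after a standard accounting, since triangle vertices are paid for at the full $\frac{4}{3}$ rate). After preprocessing, $G$ has girth at least $4$ and every 2-factor has cycles of length $\geq 4$.

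Next I would compute a 2-factor $F$ that is a local minimum of a suitably chosen weighted count of 4-cycles and 6-cycles. The key structural claim to prove is a local-improvement lemma: whenever $F$ contains a 4-cycle $C$ whose vertex set is \emph{not} a chorded 4-cycle of $G$, there is a short alternating-walk exchange, i.e., a symmetric difference of $F$ with $E(C)$ together with the edges of a constant-length alternating path in $G\setminus F$, that strictly decreases the weighted count while keeping $F$ a 2-factor. Iterating until no such move applies, the 4-cycles of $F$ all have vertex set inside $B$; a similar (more delicate) local move handles ``bad'' 6-cycles in $V\setminus B$.

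The final step is a discharging argument that assigns to each vertex $v$ a charge $1/|V(D_v)|$, where $D_v$ is the component of $F$ containing $v$, so that $k = \sum_v \mathrm{ch}(v)$. Vertices of $V\setminus B$ lie in cycles of length $\geq 6$, contributing at most $\tfrac{1}{6}$ each, while vertices in $B$ contribute at most $\tfrac{1}{4}$ each. The main obstacle, and where the specific constant $\tfrac{1}{8748}$ is forged, is to extract an extra $\tfrac{1}{17496}$ of savings per vertex in $V\setminus B$: one must show that after the local optimization above, a positive constant fraction of $V\setminus B$ vertices lie in cycles of length $\geq 8$ (rather than exactly $6$), by carefully classifying 6-cycles according to how they intersect neighboring 4- and 6-cycles and exhibiting a local move for each unimproved configuration. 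The fractional accounting, balanced across all configurations, yields exactly the savings claimed.
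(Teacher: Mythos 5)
The paper does not actually reprove this lemma; it cites it directly, noting that ``the proof of this lemma is contained in the proof of Theorem 2 in [Correa, Larr\'e, Soto],'' and that the key ingredient there is a \emph{distribution over tours} of expected size $\sum_v z(v) + 2$, where $z(v)$ is the average contribution of $v$ over the distribution. Your proposal attempts a different route, but there is a genuine gap that makes it fail.

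The central problem is that the single-2-factor-plus-doubled-spanning-tree framework cannot reach the claimed bound in the presence of 4-cycles, and your own accounting reveals this. You correctly derive that you need a 2-factor with at most roughly $\tfrac{1}{6}|B| + (\tfrac16 - \tfrac{1}{17496})|V\setminus B| + 2$ components, which means \emph{every} vertex, including those in $B$, must contribute on average at most $\tfrac16$ to the component count. But you then state that vertices of $B$ may sit in 4-cycles of the 2-factor and hence contribute up to $\tfrac14$ each. Since $\tfrac14 > \tfrac16$, the discharging does not close: the extra $\tfrac{1}{17496}$ savings you extract on $V\setminus B$ is nowhere near enough to cover a $\tfrac14 - \tfrac16 = \tfrac{1}{12}$ overshoot per vertex of $B$. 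Even with $|B| = 0$ your baseline (``all $F$-cycles have length $\ge 6$'') is not something a single 2-factor of a 2-connected cubic graph is guaranteed to achieve, and your claim that local moves can eliminate every 4-cycle of $F$ not induced by a chorded 4-cycle of $G$ is asserted without argument.

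What the cited proof actually does is more powerful than picking one 2-factor: Boyd--Sitters--van der Ster--Stougie, and Correa--Larr\'e--Soto after them, build a convex combination of Eulerian spanning multigraphs and bound the \emph{expected} cost, so a vertex that lies in a 4-cycle in some members of the distribution lies in longer structures in others, and its averaged contribution $z(v)$ can still be bounded by $\tfrac43$. The constant $\tfrac{1}{8748}$ arises from refining the per-vertex averages on a constant fraction of the non-$B$ vertices, not from cycle-length discharging on a single 2-factor. To repair your argument you would need either to switch to the convex-combination framework, or to replace the doubled-spanning-tree connector with a substantially cheaper gluing scheme that pays strictly less than two extra edges per 4-cycle of $F$; as written, the construction cannot give $\tfrac43|B|$ on the $B$ part.
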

The proof of this lemma is contained in the proof of Theorem 2 in~\cite{CorreaLS15}. More precisely, it is shown that there exists a distribution over tours of expected size $\sum_{v\in V} z(v)+2$, where $z(v)$ is the ``average contribution'' of $v$. In the proof of the Theorem 2 in \cite{CorreaLS15} it is shown that $\sum_{v\in V} z(v)\le \frac43|B| + (\frac43-\frac1{8748})(|V\backslash B|)$.

On the other hand, chorded 4-cycles are ``beneficial'' for the analysis of the M\"omke-Svensson~\cite{MomkeS16} algorithm for the graph-TSP on 2-connected subcubic graphs, as we will show next. 

\begin{lemma}\label{highb}
Consider a graph-TSP instance on a 2-connected subcubic graph $G=(V,E)$, and let $B$ be the set of nodes in $G$ contained in a chorded 4-cycle. Then, we can find in polynomial time a tour of length at most $\frac43|V| - \frac16|B| -\frac23$.
\end{lemma}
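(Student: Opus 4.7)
My plan is to apply the M\"omke--Svensson algorithm for 2-connected subcubic graphs and argue that each chorded 4-cycle contributes an additional ``removable'' edge to their analysis, yielding a total extra saving of $\tfrac{2}{3}\cdot\tfrac{|B|}{4}=\tfrac{|B|}{6}$ over the baseline MS bound of $\tfrac{4}{3}|V|-\tfrac{2}{3}$.

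First I would establish the structural claim that \emph{chorded 4-cycles in a subcubic graph are pairwise vertex-disjoint}. In a chorded 4-cycle $\{v_1,v_2,v_3,v_4\}$ with chord $\{v_2,v_4\}$, the vertices $v_2$ and $v_4$ already have three incident edges inside the cycle, so by subcubicity they have no external neighbors and cannot belong to any other chorded 4-cycle. For the peripheral vertices $v_1,v_3$, each has exactly one remaining incident edge, which is too few to play either the central (degree 3 inside) or the peripheral (degree 2 inside) role in another chorded 4-cycle; a quick case analysis rules both options out. Hence $|B|$ is divisible by $4$ and the chorded 4-cycles partition $B$ into $|B|/4$ disjoint blocks.

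Next I would invoke the M\"omke--Svensson framework, which constructs a tour from a \emph{removable pairing} $(M,R)$ of $G$ of cost at most $\tfrac{4}{3}|V|-\tfrac{2|R|}{3}-\tfrac{2}{3}$, where $R$ is a set of edges that can be ``removed'' from the doubling step and $M$ is a matching that compensates so that the resulting multigraph remains Eulerian and connected. For each chorded 4-cycle, I would extend a baseline removable pairing by adding the chord $\{v_2,v_4\}$ to $R$ and pairing it with one of the 4-cycle edges incident to a peripheral vertex in $M$. The key local check is that deleting the chord from the 4-cycle leaves the remaining edges $\{v_1,v_2\},\{v_2,v_3\},\{v_3,v_4\},\{v_4,v_1\}$ intact, so the 2-edge-connectivity condition that M\"omke--Svensson require around each removable edge is automatically satisfied; the vertex-disjointness from the first step ensures that these $|B|/4$ additional pairs do not conflict with each other or with any pairs produced by running the standard procedure on the rest of $G$. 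Summing $\tfrac{2}{3}$ per extra pair gives the stated bound.

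The main obstacle I anticipate is the bookkeeping of fitting the chord-plus-incident-edge pairs precisely into the M\"omke--Svensson definition of a removable pairing---in particular, verifying that each chord has a correctly positioned mate in $M$ and that the combined pairing globally preserves the ear-decomposition conditions MS rely on. Once these local verifications are in place, the aggregate saving of $\tfrac{|B|}{6}$ follows directly from the vertex-disjointness and the per-chord saving of $\tfrac{2}{3}$.
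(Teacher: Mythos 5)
Your structural observation that chorded $4$-cycles are pairwise vertex-disjoint in a subcubic graph is correct and is implicitly used in the paper too, but from there the paper takes a genuinely different and cleaner route than the one you propose, and the route you propose has a real gap that is more than ``bookkeeping.''

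The paper does not touch the removable-pairing machinery at all. Instead it \emph{contracts} each chorded $4$-cycle to a single node, obtaining a $2$-connected subcubic graph $\tilde G$ on $|V|-\tfrac34|B|$ vertices, applies M\"omke--Svensson to $\tilde G$ as a black box to get a distribution over Eulerian multigraphs with expected size $\tfrac43(|V|-\tfrac34|B|)-\tfrac23$, and then \emph{uncontracts}. The only property of the MS distribution it uses is the well-known fact that each edge appears an even number of times (zero or two) with probability $\tfrac13$. For a contracted node the two incident edges must have the same parity, so with probability $\tfrac23$ both are used once (add the path $v_1$--$v_2$--$v_4$--$v_3$, $3$ edges) and with probability $\tfrac13$ both are used an even number of times (add the outer $4$-cycle, $4$ edges). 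That is $\tfrac{10}{3}$ expected edges per $4$-cycle, i.e., $\tfrac{10}{3}\cdot\tfrac{|B|}{4}=\tfrac56|B|$ total, and $\tfrac43|V|-|B|-\tfrac23+\tfrac56|B|=\tfrac43|V|-\tfrac16|B|-\tfrac23$.

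Your plan, by contrast, needs to open up the MS construction and show that a removable pairing for the \emph{original} cubic graph $G$ can be extended by $|B|/4$ extra pairs, each consisting of a chord and an adjacent $4$-cycle edge. You assert that the vertex-disjointness of the chorded $4$-cycles makes these extra pairs compatible with ``the standard procedure on the rest of $G$,'' but that is exactly the step that is missing. MS's removable pairing is produced by a specific construction (from a DFS tree / open ear decomposition of $G$), and nothing prevents it from already using the edge $\{v_1,v_2\}$ (or even the chord itself) in one of its pairs; the pairs in a removable pairing must be edge-disjoint, so you cannot simply add a new pair on top. Moreover, validity of a removable pairing is a \emph{global} condition: removing any transversal of the pairs must leave $G$ connected, and adding pairs changes the set of transversals that must be checked. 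To make your argument go through you would essentially have to re-derive (or modify) MS's subcubic removable-pairing construction so that it steers clear of the $4$-cycle edges, which is a substantive piece of work, not an afterthought. The paper's contraction argument is precisely designed to avoid this: by contracting the $4$-cycles away, the MS construction never sees their internal edges, and the per-edge $\tfrac13$ probability is the only interface needed. I would therefore recommend you either flesh out a concrete removable-pairing construction (proving conditions (a)--(c) of MS for the augmented pairing), or switch to the contraction/uncontraction argument, which gets the same constant $\tfrac16|B|$ with much less machinery.
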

\begin{proof}
For convenience, we will use the word ``tour'' to refer to a connected spanning Eulerian multigraph obtained by doubling and deleting some edges of $G$. Note that the graph-TSP instance on $G$ indeed has a tour (obtained by shortcutting a Eulerian walk on this multigraph) of length at most the number of edges in the multigraph. M\"omke and Svensson~\cite{MomkeS16} show that there exists a probability distribution over tours, such that
each edge $e\in E$ appears an even number of times (zero or twice) with probability $\frac13$, and the expected number of edges in the tour is $\frac43|V|-\frac23$.

Now, we can simply contract chorded 4-cycles in a cubic graph to obtain a subcubic graph $\tilde G$, on which we can apply the M\"omke-Svensson algorithm to find a distribution over tours with $\frac43(|V|-\frac34|B|)-\frac23=\frac43|V|-|B|-\frac23$ edges in expectation. 
Next, we uncontract the chorded 4-cycles in each of the tours $T$:
For a chorded 4-cycle $(v_1,v_2,v_3,v_4)$ that is contracted to a node $v$, note that $T$ contains an even number of edges incident to $v$. Hence, either both edges incident to $v$ in $\tilde G$ appear an odd number of times in $T$, or both appear an even number of times. 
In the first case, we uncontract the chorded 4-cycle by adding the edges $\{v_1,v_2\}, \{v_2,v_4\}$ and $\{v_3,v_4\}$.
In the second case, we uncontract the chorded 4-cycle by adding the edges $\{v_1,v_2\}, \{v_2,v_3\}, \{v_3,v_4\}, \{v_4,v_1\}$.

Since each edge appears an even number of times with probability $\frac13$, the second case happens with probability $\frac 13$, and the first case thus happens with probability $\frac 23$. Hence, the expected number of edges in $T$ increases by $\frac {10}3 \frac{|B|}4$, giving a total number of edges of $\frac43|V|-\frac16|B|-\frac23$ in expectation.
\end{proof}

Note that the bound in Lemma~\ref{correals} is increasing in $|B|$ and the bound in Lemma~\ref{highb} is decreasing in $|B|$.
Setting the bound in Lemma~\ref{correals} equal to the bound in Lemma~\ref{highb} gives $|B| = \frac1{1459}|V|$
and thus shows that there exists a polynomial time algorithm for finding a tour of length at most $\left(\frac43- \frac1{8754}\right)|V|$ for a graph-TSP instance on a 2-connected cubic graph $G=(V, E)$. By an observation of M\"omke and Svensson~\cite{MomkeS16}, this also implies a $\left(\frac43- \frac1{8754}\right)$-approximation algorithm for cubic graph-TSP, i.e., the graph $G$ does not have to be 2-connected.
We have thus shown the following result.
\begin{theorem}
There exists a $\left(\frac43- \frac1{8754}\right)$-approximation algorithm for Graph-TSP on cubic graphs.
\end{theorem}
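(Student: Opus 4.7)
The plan is a one-line balancing argument using the two lemmas just proved. I would run, in parallel, the algorithm behind Lemma~\ref{correals} and the algorithm behind Lemma~\ref{highb} on the input 2-connected cubic graph $G=(V,E)$, and return the shorter of the two resulting tours. (The set $B$ can even be computed explicitly in polynomial time, since in a cubic graph there are only $O(|V|)$ induced 4-cycles and each chord test is trivial, so one can just invoke whichever of the two algorithms gives the better guarantee on the instance at hand.) Let $f(|B|)$ and $g(|B|)$ denote the upper bounds from Lemma~\ref{correals} and Lemma~\ref{highb}, respectively; the combined algorithm then produces a tour of length at most $\min\{f(|B|),g(|B|)\}$.

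The key observation is that $f$ is strictly increasing in $|B|$ while $g$ is strictly decreasing, so $\min\{f,g\}$ is maximized at the unique crossover point where the two bounds coincide. Setting $f(|B|)=g(|B|)$, namely
\[
\frac43|B|+\Bigl(\frac43-\frac1{8748}\Bigr)(|V|-|B|)+2 \;=\; \frac43|V|-\frac16|B|-\frac23,
\]
and ignoring the additive $O(1)$ constants (which are swamped by the linear terms for large $|V|$), we obtain $|B|\bigl(\tfrac1{8748}+\tfrac16\bigr)=\tfrac{|V|}{8748}$. Since $\tfrac16=\tfrac{1458}{8748}$, this gives $|B|=|V|/1459$. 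Substituting into $g$ yields
\[
\tfrac43|V|-\tfrac{1}{6\cdot 1459}|V|-\tfrac23 \;=\; \Bigl(\tfrac43-\tfrac1{8754}\Bigr)|V|-\tfrac23,
\]
and $f$ evaluates to the same quantity up to the $O(1)$ constants. Hence for every value of $|B|$ we have $\min\{f(|B|),g(|B|)\}\le (\tfrac43-\tfrac1{8754})|V|+O(1)$, which, together with $\mathrm{OPT}\ge |V|$, implies the claimed approximation ratio.

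Finally, to discharge the 2-connectedness assumption on $G$, I would invoke the observation of M\"omke and Svensson~\cite{MomkeS16} cited in the excerpt, which lifts any $\rho$-approximation for graph-TSP on 2-connected cubic graphs to a $\rho$-approximation on arbitrary cubic graphs. There is no real obstacle in this proof; once Lemmas~\ref{correals} and~\ref{highb} are in hand, the remaining work is a single balancing calculation together with this black-box reduction.
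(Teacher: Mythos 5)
Your proposal is correct and follows essentially the same route as the paper: run the algorithms underlying Lemmas~\ref{correals} and~\ref{highb}, take the better tour, observe that one bound increases and the other decreases in $|B|$, balance at $|B|=|V|/1459$ to get $(\frac43-\frac1{8754})|V|$ up to additive constants, and invoke the M\"omke--Svensson reduction to drop the 2-connectedness assumption. The only difference is cosmetic: you are slightly more explicit about the ``take the min'' step and the treatment of the $O(1)$ terms, whereas the paper states the balancing argument more tersely; the substance is identical.
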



\subsection*{Acknowledgements}
The author would like to thank Marcin Mucha for careful reading and pointing out an omission in a previous version, Frans Schalekamp for helpful discussions, and an anonymous reviewer for suggesting the simplified proof for the result in Section~\ref{sec:cub} for cubic non-bipartite graphs. Other anonymous reviewers are acknowledged for helpful feedback on the presentation of the algorithm for bipartite cubic graphs.

\section{Tightness of the Analysis of Algorithm~\ref{fig:cubicbip}}\label{sec:tight}
We give an example of a cubic bipartite graph $G=(V, E)$ for which both the 2-factors $F_1$ and $F_2$ that result from Algorithm~\ref{fig:cubicbip} have $|V|/8$ components.

The instance has $48$ nodes, numbered $1$ through $48$, and $(V, F_1)$ contains six cycles, four cycles of size 6, one cycle of size 10 and one cycle of size 14. For brevity, we denote the cycles by only giving an ordered listing of their nodes; an edge between consecutive nodes and between the last and first node is implicit. The cycles in $(V, F_1)$ are:
\begin{align*}
&C_1=(1,2,3,4,5,6), \\
&C_2=(7,8,9,10,11,12), \\
&C_3=(13,14,15,16,17,18), \\
&C_4=(19,20,21,22,23,24), \\
&C_5=(25,26,27,28,29,30,31,32,33,34), \\
&C_6=(35,36,37,38,39,40,41,42,43,44,45,46,47,48).
\end{align*}

The second 2-factor $(V,F_2)$ has six cycles as well, namely, three cycles of size 6 and three cycles of size 10. 
We again denote the cycles by giving an ordered listing of the nodes, but now a semicolon between subsequent nodes indicates that the nodes are connected by an edge in $F_2\setminus F_1$, and a comma denotes that they are connected by an edge in $F_2\cap F_1$.
The cycles in $(V, F_2)$ are
\begin{align*}
&D_1 =(6,1;30,31;42,43;28,29;40,41;), &&D_4=(18,13;38,39;44,45;), \\
&D_2=(48,35;10,11;4,5;8,9;2,3;), &&D_5=(24,19;26,27;32,33;), \\
&D_3=(34,25;16,17;22,23;14,15;20,21;), &&D_6=(12,7;36,37;46,47;).
\end{align*}
It is straightforward to verify that every node occurs in exactly one cycle in $(V, F_1)$ and exactly one cycle in $(V, F_2)$, and that each cycle in $(V, F_1)$ alternates edges in $F_1\setminus F_2$ and edges in $F_1\cap F_2$, and that each cycle in $(V, F_2)$ alternates edges in $F_2\cap F_1$ and edges in $F_2\setminus F_1$. Furthermore, each cycle $C_i$ in $(V, F_1)$ has exactly two edges in a cycle $D$ in $(V, F_2)$ of size exactly 10.

\subsubsection*{Local Optimum}
Figure~\ref{fig:example} depicts each of the six cycles $C_i$ in $(V, F_1)$, together with the cycles $D_j$ in $(V,F_2)$ that intersect the given cycle $C_i$. For any of the cycles $C_i$, replacing $F_2$ by $F_2\triangle E(C_i)$ does not decrease the number of components of $(V, F_2)$ for any cycle $C_i$, nor does the modification of $F_2$ for a chorded cycle described in Algorithm~\ref{fig:cubicbip}.

Hence, if, rather than following Algorithm~\ref{fig:cubicbip}, we would execute one of the two possible modifications of $F_2$ suggested by the algorithm, as long as this reduced the number of components of $(V, F_2)$, then the 2-factor $F_2$ is in fact a local optimum with respect to this process since none of the possible modifications reduces the number of components.

We note that the instance does have a Hamilton cycle. This cycle contains subpaths of more than two adjacent edges in $E\setminus F_1$, so it can never be found from $F_1$ and $F_2$ using the ``moves'' we defined, even if we allow ``moves'' that do not reduce the number of components of $(V, F_2)$.
We give the Hamilton cycle by again giving an ordered listing of the nodes, where a semicolon between subsequent nodes indicates that the nodes are connected by an edge in $E\setminus F_1$, and a comma denotes that they are connected by an edge in $F_1$.
\begin{align*}
&(1;30, 29 ; 40,39,38 ; 13,14,15,16;25,26;19,20,21;34,33;24,23,22;17,18;45,\\
&44,43;28,27;32,31;42,41;6,5;8,7,12,	11;	4,3;48,47,46;37,36,35;10,9;2,1).
\end{align*}

\newlength{\x}
\setlength{\x}{0.000566\textwidth}
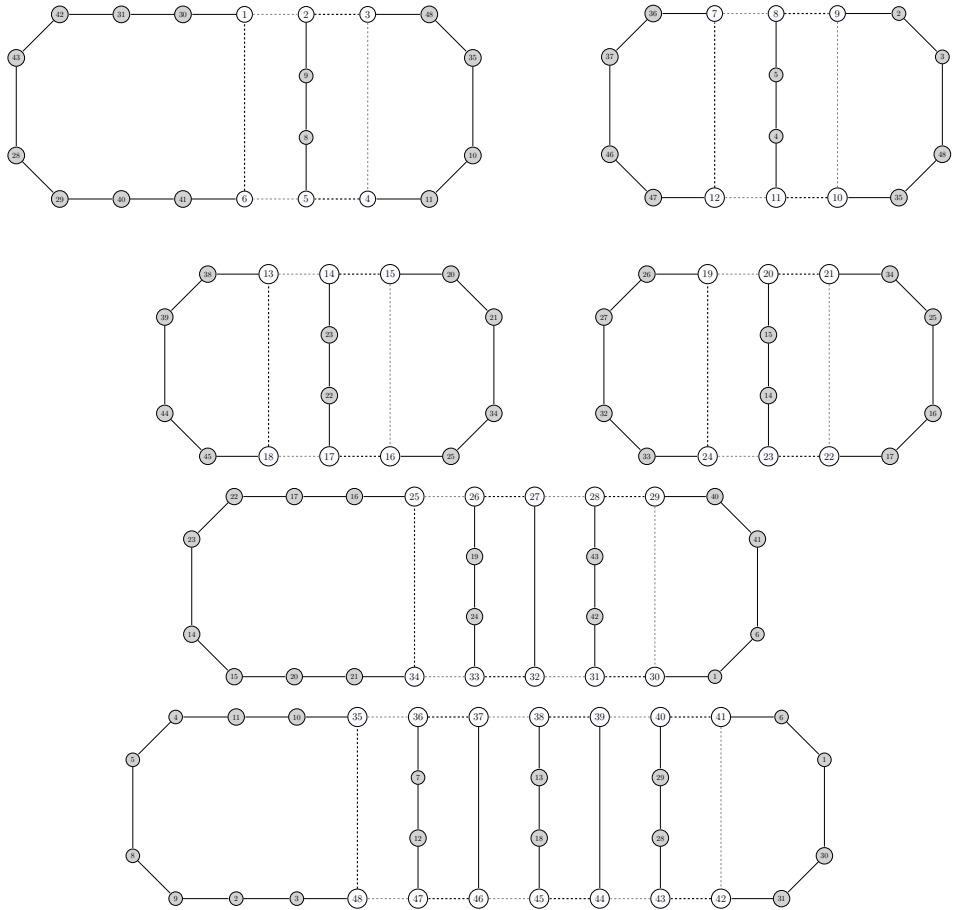
\begin{figure}[h!]\begin{center}
 \resizebox{886.82\x}{!}{
 \begin{tikzpicture}[->,>=stealth',shorten >=1pt,auto,node distance=3cm,
  thick,main node/.style={circle,fill=blue!2,draw,font=\sffamily\Large\bfseries},aux node/.style={circle,fill=gray!35,draw,font=\sffamily\bfseries}]

  \node[main node] (1) {$1$};
  
   \foreach \name/\pos in {{2/1}, {3/2}}
       \node[main node] (\name) [right of = \pos] {$\name$};

  \foreach \name/\pos in {{9/2}, {8/9}}
       \node[aux node] (\name) [below of = \pos] {$\name$};
       
  \foreach \name/\pos/\loc in {{5/8/below of}, {6/5/left of}, {4/5/right of}}
       \node[main node] (\name) [\loc = \pos] {$\name$};
       
  \foreach \name/\pos/\loc in {{48/3/right of},{35/48/below right of},{11/4/right of},{10/11/above right of}}
       \node[aux node] (\name) [\loc = \pos] {$\name$};

   \foreach \name/\pos/\loc in {{30/1/left of},{31/30/left of},{42/31/left of},{43/42/below left of},
   {41/6/left of},{40/41/left of},{29/40/left of},{28/29/above left of}}
          \node[aux node] (\name) [\loc = \pos] {$\name$};

\path[-,every node/.style={font=\sffamily\Large}]
	\foreach \source/\dest in {1/2,3/4,5/6}{
		(\source) edge [draw=gray, dashed] node {} (\dest)	}
	\foreach \source/\dest in {2/3,4/5,6/1}{
		(\source) edge [ultra thick, dashed] node {} (\dest)	}

	\foreach \source/ \dest in {2/9,9/8,8/5,3/48,48/35,35/10,10/11,11/4,1/30,30/31,31/42,42/43,43/28,28/29,29/40,40/41,41/6} {
		(\source) edge [very thick] node  {} (\dest)
	}
;
\end{tikzpicture}
}
 \hspace{1.3cm}  \resizebox{655.81\x}{!}{
 \begin{tikzpicture}[->,>=stealth',shorten >=1pt,auto,node distance=3cm,
  thick,main node/.style={circle,fill=blue!2,draw,font=\sffamily\Large\bfseries},aux node/.style={circle,fill=gray!35,draw,font=\sffamily\bfseries}]

  \node[main node] (7) {$7$};
  
   \foreach \name/\pos in {{8/7}, {9/8}}
       \node[main node] (\name) [right of = \pos] {$\name$};

  \foreach \name/\pos in {{5/8}, {4/5}}
       \node[aux node] (\name) [below of = \pos] {$\name$};
       
  \foreach \name/\pos/\loc in {{11/4/below of}, {12/11/left of}, {10/11/right of}}
       \node[main node] (\name) [\loc = \pos] {$\name$};
       
  \foreach \name/\pos/\loc in {{2/9/right of},{3/2/below right of},{35/10/right of},{48/35/above right of}}
       \node[aux node] (\name) [\loc = \pos] {$\name$};

   \foreach \name/\pos/\loc in {{36/7/left of},{37/36/below left of},{47/12/left of},{46/47/above left of}}
          \node[aux node] (\name) [\loc = \pos] {$\name$};

\path[-,every node/.style={font=\sffamily\Large}]
	\foreach \source/\dest in {7/8,9/10,11/12}{
		(\source) edge [draw=gray, dashed] node {} (\dest)	}
	\foreach \source/\dest in {8/9,10/11,12/7}{
		(\source) edge [ultra thick, dashed] node {} (\dest)	}

	\foreach \source/ \dest in {9/2,2/3,3/48,48/35,35/10,8/5,5/4,4/11,7/36,36/37,37/46,46/47,47/12} {
		(\source) edge [very thick] node  {} (\dest)
	}
;
\end{tikzpicture}
}
  \vspace{.5\baselineskip}
  
\hspace{1.85cm}  \resizebox{649\x}{!}{
\begin{tikzpicture}[->,>=stealth',shorten >=1pt,auto,node distance=3cm,
  thick,main node/.style={circle,fill=blue!2,draw,font=\sffamily\Large\bfseries},aux node/.style={circle,fill=gray!35,draw,font=\sffamily\bfseries}]

  \node[main node] (13) {$13$};
  
   \foreach \name/\pos in {{14/13}, {15/14}}
       \node[main node] (\name) [right of = \pos] {$\name$};

  \foreach \name/\pos in {{23/14}, {22/23}}
       \node[aux node] (\name) [below of = \pos] {$\name$};
       
  \foreach \name/\pos/\loc in {{17/22/below of}, {18/17/left of}, {16/17/right of}}
       \node[main node] (\name) [\loc = \pos] {$\name$};
       
  \foreach \name/\pos/\loc in {{20/15/right of},{21/20/below right of},{25/16/right of},{34/25/above right of}}
       \node[aux node] (\name) [\loc = \pos] {$\name$};

   \foreach \name/\pos/\loc in {{38/13/left of},{39/38/below left of},{45/18/left of},{44/45/above left of}}
          \node[aux node] (\name) [\loc = \pos] {$\name$};

\path[-,every node/.style={font=\sffamily\Large}]
	\foreach \source/\dest in {13/14,15/16,17/18}{
		(\source) edge [draw=gray, dashed] node {} (\dest)	}
	\foreach \source/\dest in {14/15,16/17,18/13}{
		(\source) edge [ultra thick, dashed] node {} (\dest)	}

	\foreach \source/ \dest in {15/20,20/21,21/34,34/25,25/16,14/23,23/22,22/17,13/38,38/39,39/44,44/45,45/18} {
		(\source) edge [very thick] node  {} (\dest)
	}
;
\end{tikzpicture} }\hspace{1.2cm}\resizebox{649\x}{!}{
\begin{tikzpicture}[->,>=stealth',shorten >=1pt,auto,node distance=3cm,
  thick,main node/.style={circle,fill=blue!2,draw,font=\sffamily\Large\bfseries},aux node/.style={circle,fill=gray!35,draw,font=\sffamily\bfseries}]

  \node[main node] (19) {$19$};
  
   \foreach \name/\pos in {{20/19}, {21/20}}
       \node[main node] (\name) [right of = \pos] {$\name$};

  \foreach \name/\pos in {{15/20}, {14/15}}
       \node[aux node] (\name) [below of = \pos] {$\name$};
       
  \foreach \name/\pos/\loc in {{23/14/below of}, {24/23/left of}, {22/23/right of}}
       \node[main node] (\name) [\loc = \pos] {$\name$};
       
  \foreach \name/\pos/\loc in {{34/21/right of},{25/34/below right of},{17/22/right of},{16/17/above right of}}
       \node[aux node] (\name) [\loc = \pos] {$\name$};

   \foreach \name/\pos/\loc in {{26/19/left of},{27/26/below left of},{33/24/left of},{32/33/above left of}}
          \node[aux node] (\name) [\loc = \pos] {$\name$};

\path[-,every node/.style={font=\sffamily\Large}]
	\foreach \source/\dest in {20/21,22/23,24/19}{
		(\source) edge [ultra thick, dashed] node {} (\dest)	}
	\foreach \source/\dest in {19/20,21/22,23/24}{
		(\source) edge [draw=gray, dashed] node {} (\dest)	}

	\foreach \source/ \dest in {21/34,34/25,25/16,16/17,17/22,20/15,15/14,14/23,19/26,26/27,27/32,32/33,33/24} {
		(\source) edge [very thick] node  {} (\dest)
	}
;
\end{tikzpicture}
}
 \vspace{.5\baselineskip}
 
  \resizebox{1089\x}{!}{
  \begin{tikzpicture}[->,>=stealth',shorten >=1pt,auto,node distance=3cm,
  thick,main node/.style={circle,fill=blue!2,draw,font=\sffamily\Large\bfseries},aux node/.style={circle,fill=gray!35,draw,font=\sffamily\bfseries}]

  \node[main node] (25) {$25$};
  
   \foreach \name/\pos in {{26/25}, {27/26},{28/27}, {29/28}}
       \node[main node] (\name) [right of = \pos] {$\name$};

  \foreach \name/\pos in {{19/26}, {24/19},{43/28},{42/43}}
       \node[aux node] (\name) [below of = \pos] {$\name$};
       
  \foreach \name/\pos/\loc in {{33/24/below of}, {34/33/left of}, {32/33/right of},{31/42/below of},{30/31/right of}}
       \node[main node] (\name) [\loc = \pos] {$\name$};
       
  \foreach \name/\pos/\loc in {{40/29/right of},{41/40/below right of},{1/30/right of},{6/1/above right of}}
       \node[aux node] (\name) [\loc = \pos] {$\name$};

   \foreach \name/\pos/\loc in {{16/25/left of},{17/16/left of},{22/17/left of},{23/22/below left of},
   {21/34/left of},{20/21/left of},{15/20/left of},{14/15/above left of}}
          \node[aux node] (\name) [\loc = \pos] {$\name$};

\path[-,every node/.style={font=\sffamily\Large}]
	\foreach \source/\dest in {25/26,27/28,29/30,31/32,33/34}{
		(\source) edge [draw=gray, dashed] node {} (\dest)	}
	\foreach \source/\dest in {26/27,28/29,30/31,32/33,34/25}{
		(\source) edge [ultra thick, dashed] node {} (\dest)	}

	\foreach \source/ \dest in {27/32,28/43,43/42,42/31,26/19,19/24,24/33,29/40,40/41,41/6,6/1,1/30,25/16,16/17,17/22,22/23,23/14,14/15,15/20,20/21,21/34} {
		(\source) edge [very thick] node  {} (\dest)
	}
;
\end{tikzpicture}
}
 \vspace{.5\baselineskip}
 
  \resizebox{1321\x}{!}{
  \begin{tikzpicture}[->,>=stealth',shorten >=1pt,auto,node distance=3cm,
  thick,main node/.style={circle,fill=blue!2,draw,font=\sffamily\Large\bfseries},aux node/.style={circle,fill=gray!35,draw,font=\sffamily\bfseries}]

  \node[main node] (35) {$35$};
  
   \foreach \name/\pos in {{36/35}, {37/36},{38/37}, {39/38},{40/39},{41/40}}
       \node[main node] (\name) [right of = \pos] {$\name$};

  \foreach \name/\pos in {{7/36}, {12/7},{13/38},{18/13},{29/40},{28/29}}
       \node[aux node] (\name) [below of = \pos] {$\name$};
       
  \foreach \name/\pos/\loc in {{47/12/below of}, {48/47/left of}, {46/47/right of},{45/18/below of},{44/45/right of},{43/28/below of},{42/43/right of}}
       \node[main node] (\name) [\loc = \pos] {$\name$};
       
  \foreach \name/\pos/\loc in {{6/41/right of},{1/6/below right of},{31/42/right of},{30/31/above right of}}
       \node[aux node] (\name) [\loc = \pos] {$\name$};

   \foreach \name/\pos/\loc in {{10/35/left of},{11/10/left of},{4/11/left of},{5/4/below left of},
   {3/48/left of},{2/3/left of},{9/2/left of},{8/9/above left of}}
          \node[aux node] (\name) [\loc = \pos] {$\name$};

\path[-,every node/.style={font=\sffamily\Large}]
	\foreach \source/\dest in {35/36,37/38,39/40,41/42,43/44,45/46,47/48}{
		(\source) edge [draw=gray, dashed] node {} (\dest)	}
	\foreach \source/\dest in {36/37,38/39,40/41,42/43,44/45,46/47,48/35}{
		(\source) edge [ultra thick, dashed] node {} (\dest)	}

	\foreach \source/ \dest in {39/44,40/29,29/28,28/43,37/46,38/13,13/18,18/45,36/7,7/12,12/47,41/6,6/1,1/30,30/31,31/42,35/10,10/11,11/4,4/5,5/8,8/9,9/2,2/3,3/48} {
		(\source) edge [very thick] node  {} (\dest)
	}
;
\end{tikzpicture}
}
 \end{center}
\caption{Depicted are six figures, one for each cycle $C_i$ in $(V, F_1)$ for $i=1,\ldots, 6$. The white nodes are the nodes in the cycle $C_i$, and the dashed edges are the edges in $F_1\setminus E(C_i)$. The non-dashed and thick dashed edges are edges in $F_2$. For each $C_i$, the number of components of $(V, F_2)$ does not decrease by replacing $F_2$ by $F_2\triangle E(C_i)$.
}\label{fig:example}
\end{figure}

\end{document}